\newcommand{\beq}{\begin{equation}}
\newcommand{\eeq}{\end{equation}}
\newcommand{\figref}[1]{Fig.~\ref{#1}}
\newcommand*{\Scale}[2][4]{\scalebox{#1}{$#2$}}%
\theoremstyle{plain}
\newtheorem{lemmacounter}{Theorem}
\newtheorem{lemma}[lemmacounter]{Lemma}
\begin{document}

\title{Dynamic User Clustering and Power Allocation for Uplink and Downlink Non-Orthogonal Multiple Access (NOMA) Systems}
\author{\IEEEauthorblockN{Md Shipon Ali, Hina Tabassum, and Ekram Hossain}\thanks{The authors are with the Department of Electrical and Computer Engineering, at the University of Manitoba, Canada (Emails: alims@myumanitoba.ca, \{hina.tabassum,ekram.hossain\}@umanitoba.ca). The work was supported by a Discovery Grant from the Natural Sciences and Engineering Research Council of Canada (NSERC).} }
\maketitle

\begin{abstract}
Non-Orthogonal Multiple Access (NOMA) has recently been considered as a key enabling technique for 5G cellular systems. In NOMA, by exploiting the channel gain differences multiple users are multiplexed into transmission power domain and then non-orthogonally scheduled on the same spectrum resources. Successive interference cancellation (SIC) is then applied at the receiver(s) to decode the message signals. In this paper, first we briefly  describe the differences in the working principles of uplink and downlink NOMA transmissions. Then, for both  uplink and downlink NOMA, we formulate a sum-throughput maximization problem in a cell such that the user clustering (i.e., grouping users into a single cluster or multiple  clusters) and power allocations in NOMA cluster(s) can be optimized under  transmission power constraints, minimum rate requirements of the users, and SIC constraints. Due to the combinatorial nature of the formulated mixed integer non-linear programming (MINLP) problem, we solve the problem in two steps, i.e., by first grouping users into clusters and then optimizing their respective power allocations. In particular, we propose a low-complexity sub-optimal user grouping scheme. The proposed  scheme exploits the channel gain differences among users in a NOMA cluster and group them into a single cluster or multiple clusters in order to enhance the sum-throughput  of the system.  For a given set of NOMA clusters, we then derive the optimal power allocation policy  that maximizes the sum throughput per NOMA cluster and in turn maximizes the overall system throughput. Using KKT optimality conditions, closed-form solutions for optimal power allocations are derived for any cluster size, considering  both uplink and downlink NOMA systems. Numerical results compare the performance of NOMA over orthogonal multiple access (OMA) and illustrate the significance of NOMA in various network scenarios. 
\end{abstract}

\begin{IEEEkeywords}
5G cellular, non-orthogonal multiple access (NOMA), orthogonal multiple access (OMA), power allocation, throughput maximization, user grouping.
\end{IEEEkeywords}

\section{Introduction}
Recently, non-orthogonal multiple access (NOMA)~\cite{saito2013} has been considered as a promising technique for  fifth generation (5G) and beyond 5G (B5G) cellular networks. The key idea of NOMA  is to simultaneously serve multiple users (ideally all active users in a serving cell) over same radio resources at the expense of minimal inter-user interference. NOMA not only allows serving individual users with higher effective bandwidth but also allows scheduling more users than the number of available resources. In contrast to  conventional orthogonal multiple access (OMA), where every user is served on exclusively allocated radio resources, NOMA superposes the message signals of multiple users in  power domain by exploiting their respective channel gain differences.
Successive interference cancellation (SIC) is then applied at the receivers for multi-user detection and decoding. For example, in downlink NOMA, the base station (BS) schedules different users over  same resources but their respective message signals are transmitted using different power levels. By exploiting the power differences, each user equipment (UE) can apply SIC and in turn decode its desired signal.

\subsection{Existing Research on NOMA}
Recently, numerous research activities have been initiated across the globe to identify the potential gains of NOMA in both the downlink and uplink transmissions. Here we review the most recent and relevant research studies for uplink and downlink NOMA transmissions.

\subsubsection{Downlink NOMA}
The basic concept of NOMA  was  exploited in \cite{saito2013}-\cite{aben2013} for downlink transmissions. The authors in \cite{saito2013}-\cite{aben2013} proposed power domain user multiplexing at the BSs and SIC-based signal reception at UE terminals.  In \cite{ben2013}, the authors discussed various practical challenges of NOMA systems, such as multi-user power allocation and user scheduling schemes,  error propagation in SIC, overall system overhead, user mobility, and the combination of NOMA with Multiple-Input Multiple-Output (MIMO). System-level and link-level simulations in \cite{aben2013} indicated clear benefits of NOMA over OMA in terms of overall system throughput as well as individual user's throughput. In \cite{ding2014}, closed-form expressions for ergodic sum-rate and outage probability   were presented for two users considering static power allocations.

The impact of user pairing  was studied in \cite{ding2016} for a two-user NOMA system. The authors proposed fixed and opportunistic user pairing schemes by statically allocating transmission powers among NOMA users. On the other hand, the impact of power allocation on the fairness of the downlink NOMA was investigated in \cite{tim2015}, considering perfect channel state information (CSI) feedback as well as average CSI feedback. In \cite{ding2015}, a cooperative NOMA system was studied, where the authors advocated the idea of pairing weak channel users with the strong channel users for cooperative data transmission.  
A test-bed for two-user NOMA system was presented  in \cite{saito2015}. The experiments were performed by setting $5.4$ MHz bandwidth for NOMA users. The results were compared with those for a two-user OMA system where each user has a transmission bandwidth of $2.7$ MHz~\cite{saito2015}. The results showed significance of NOMA over OMA in terms of aggregate as well as individual user's throughput.

\subsubsection{Uplink NOMA}
For uplink transmissions, NOMA  was first investigated in \cite{endu2011} where power control was applied at UE transmitter and minimum mean squared error (MMSE)-based SIC decoding was utilized at BS receiver. A joint subcarrier and power allocation problem was studied in \cite{imari2014}. Specifically, a sub-optimal solution was designed to maximize the sum-rate of a NOMA cluster.
Closed-form expressions for sum-throughput and outage probability were derived  for a two-user uplink NOMA system in \cite{zhang2016} assuming static powers  of different users. The authors in \cite{zhang2016} also compared their results with TDMA-based OMA system and concluded that without proper selection of target data rate for each NOMA user, a user can always be in outage. This conclusion  was also mentioned in \cite{ding2014} for downlink NOMA. Apart from these, a robust user scheduling algorithm for uplink NOMA with SC-FDMA was designed in \cite{li2015}, where the distinct  channel gains of different users were exploited to obtain efficient user grouping.

\subsection{Motivation and Contributions}
For both uplink and downlink NOMA systems, efficient user clustering and power allocation among users are the most fundamental design issues. To date, most of the research investigations  have been conducted either for downlink or for uplink scenario while considering two users in the system with fixed power allocations. In particular, \textbf{there is no comprehensive investigation to precisely analyze the differences in  uplink and downlink NOMA systems and their respective impact on the user grouping and power allocation problems}. In this context, this paper focuses on developing efficient user clustering and power allocation solutions for multi-user uplink and downlink NOMA systems. The contributions of this paper are outlined as follows: 

\begin{itemize}
\item We briefly review and describe the differences in the working principles of uplink and downlink NOMA. 

\item For both uplink and downlink NOMA, we formulate cell-throughput maximization problem such that user grouping and power allocations in NOMA cluster(s) can be optimized under  transmission power constraints, minimum rate requirements of the users, and SIC constraints.

\item Due to the combinatorial nature of the formulated mixed integer non-linear programming (MINLP) problem, we  propose a low-complexity sub-optimal user grouping scheme. The proposed scheme exploits the channel gain differences among users in a NOMA cluster and group them either in a single cluster or multiple clusters to enhance the sum throughput  of the uplink and downlink NOMA systems.  

\item For a given set of NOMA clusters, we derive optimal power allocation that maximizes the sum throughput of all users in a cluster and in turn maximizes the overall system throughput. Using KKT optimality conditions,  for both uplink and downlink NOMA, we derive closed-form optimal power allocations for any cluster size.

\item We evaluate the performances of different uplink and downlink NOMA systems using the proposed user grouping and power allocation solutions. Numerical results compare the performance of NOMA over OMA and illustrate the significance of NOMA in various network scenarios. Important guidelines related to the key design factors of NOMA systems are obtained.

\end{itemize}

\subsection{Paper Organization}
The rest of the paper is organized as follows: Section II discusses the fundamentals of downlink and uplink NOMA systems. Section~III presents the system model, assumptions,  and the joint problem formulation for optimal user clustering and power allocation in NOMA systems. Section~IV and Section~V, respectively, discuss the proposed sub-optimal user clustering solution and the optimal power allocation solutions for uplink and downlink NOMA systems.  Section VI evaluates the performance of the proposed solutions numerically and Section~VII concludes the article.

\section{Fundamentals of Uplink and Downlink NOMA}
In this section, we discuss the basic concepts of downlink and uplink NOMA  considering  $m$ users with distinct channel gains in a cluster.  The power domain multiplexing is applied  to superpose multiple signals,  whereas an SIC mechanism is applied at the receiver(s) to decode the superposed signals. 

\subsection{Downlink NOMA}
Let us consider a general $m$-user downlink NOMA system, where all users experience distinct channel gains. In $m$-user downlink NOMA, a single transmitter (i.e., BS) non-orthogonally transmits $m$ different signals over the same radio resources; whereas, all $m$ receivers (i.e., UEs) receive their desired signals along with the interferences caused by the messages of other UEs. To obtain the desired signal, each SIC receiver first decodes the dominant\footnote{Dominant interference refers to the interference which is sufficiently stronger than the receiver's desired signal.} interferences and then subtracts them  from the superposed signal. Therefore, the received signal strength of the interference signals needs to be sufficiently higher in comparison to the desired signal in order to cancel them by SIC processing at the receiver end.
Since each UE receives all signals (desired and interfering signals) over the same radio spectrum (i.e., channel), multiplexing of different signals with different power levels is crucial to diversify each signal and to perform SIC at a given UE receiver.  

In downlink NOMA, the  messages of high channel gain users are transmitted with low power levels  whereas the  messages of low channel gain users are transmitted with high power levels. As such, at a given receiver, the strong interfering signals are mainly due to the information of low channel gain users. 
The weakest channel user (who receives low interferences due to relatively low powers of the  messages of high channel gain users) cannot suppress  any interferences. However, the highest  channel gain user (who receives  strong interferences due to relatively high powers of the messages of low channel gain users) can suppress all interfering signals.

\begin{figure}[h]
\begin{center}
	\includegraphics[width=3.6 in]{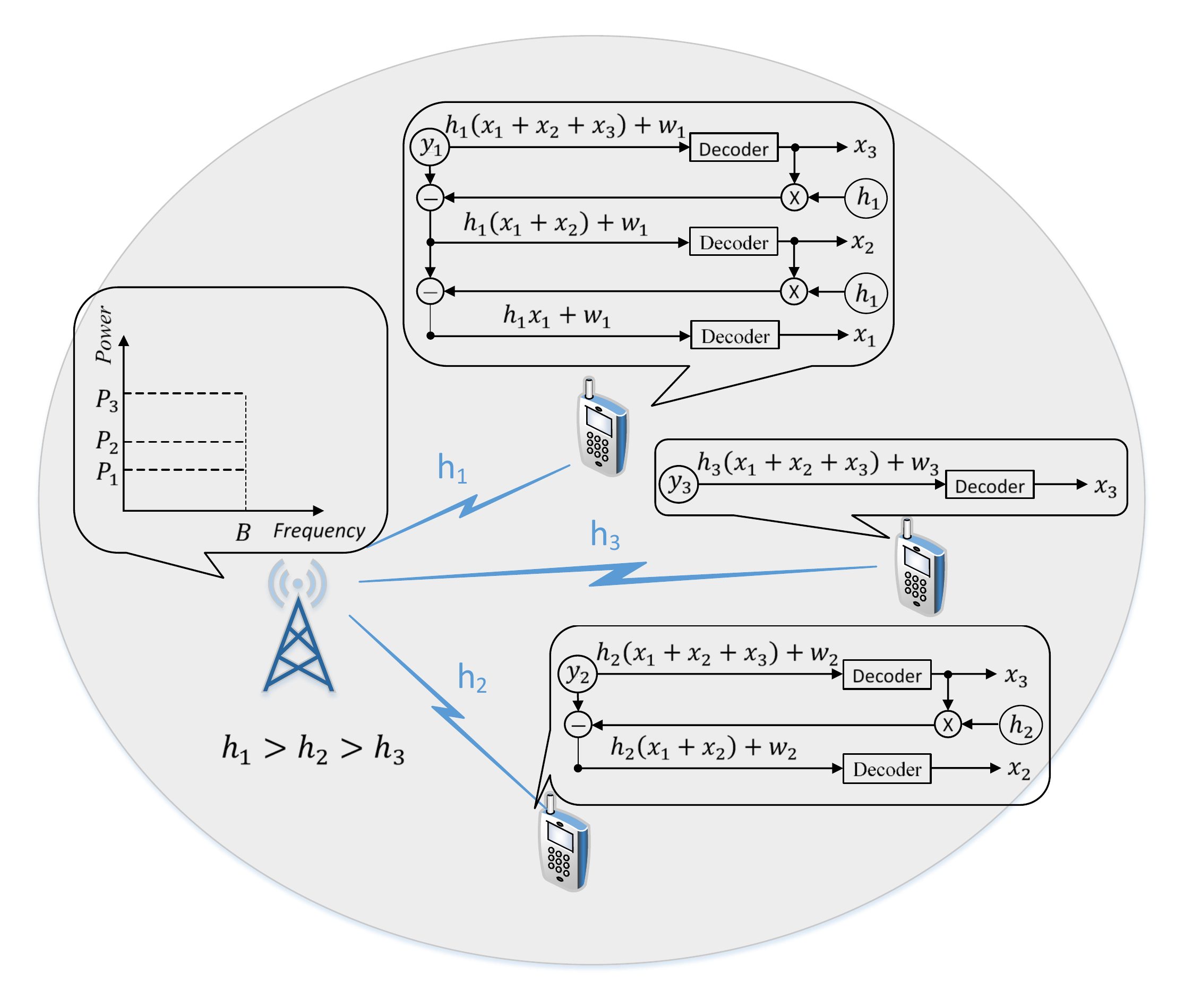}
	\caption{Illustration of a $3$-user downlink NOMA cluster  with SIC at BS.}
	\label{fig:alpha3}
 \end{center}
\end{figure} 

{\bf Illustration:} 
Fig.~\ref{fig:alpha3} illustrates a $3$-user downlink NOMA system where $h_1$, $h_2$, and $h_3$ are the channel gains of $UE_1$, $UE_2$, and $UE_3$, respectively. Also, it is assumed that $x_1$, $x_2$, and $x_3$ are the desired messages of $UE_1$, $UE_2$, and $UE_3$, respectively, while $w_1$, $w_2$, and $w_3$ denote the respective additive white Gaussian noise (AWGN). If $h_1>h_2>h_3$, then $UE_1$ can perform SIC to cancel interference from both $UE_2$ and $UE_3$, whereas $UE_2$ can only cancel interference from $UE_3$. Also, $UE_3$ experiences interference from both $UE_1$ and $UE_2$, but cannot cancel any of them.  Therefore, the achievable throughput for $UE_i, \,\forall\,i=1,2,3$, in a $3$-user downlink NOMA cluster can be expressed as
\begin{align}
\hat R_i = \omega B \log_2\Bigg(1+\frac{P_i \gamma_i}{\sum\limits_{j = 1}^{i-1} P_j\gamma_i+\omega}\Bigg),\,\forall\,i=1,2,3,
\label{ugdl2}
\end{align}
where $\gamma_i=\frac{h_i}{N_0B}$ is the normalized channel gain with $N_0$ being the noise power, $\omega$ is the number of radio channels (e.g., resource blocks in an LTE/LTE-A system) assigned for the cluster,  $B$ is the transmission bandwidth of each channel (e.g., resource block).

To perform SIC, transmission power for each NOMA user needs to be selected properly. If $P_1$, $P_2$, and $P_3$ are the transmission powers for $UE_1$, $UE_2$, and $UE_3$, respectively, then the power allocations need to satisfy the following conditions for efficient SIC at $UE_1$ receiver:
\begin{align}
P_3\gamma_1 - (P_1+P_2)\gamma_1 \geq P_{tol}, \label{dl0}\\
P_2\gamma_1 - P_1\gamma_1 \geq P_{tol}, \quad \enspace
\label{dl1}
\end{align}
where $P_1 + P_2 + P_3 \leq P_t$, $P_t$ is the downlink power budget for this $3$-user NOMA cluster, and $P_{{tol}}$ is the minimum power difference needed to distinguish between the signal to be decoded and the remaining non-decoded message signals. 
\eqref{dl0} and \eqref{dl1} represent the power allocation conditions to cancel interference of $UE_3$ and $UE_2$, respectively, at $UE_1$ receiver. 

From \eqref{dl0} and \eqref{dl1}, it is evident that the transmit power for any user must be the greater than sum transmit power for all users with relatively stronger channel gains. That is, the transmit power for $UE_3$ must be greater than the sum transmit power for $UE_1$ and $UE_2$, while the transmit power for $UE_2$ needs to be greater the transmit power for $UE_1$. Subsequently, the power allocation condition to cancel the interference of $UE_3$ at $UE_2$ receiver can be given as 
\begin{align}
P_3\gamma_2 - (P_1+P_2)\gamma_2 \geq P_{tol}.
\label{dl2}
\end{align}
Note that, since $\gamma_1 > \gamma_2$, \eqref{dl0} is automatically satisfied if \eqref{dl2} holds. Therefore, the necessary power constraints  for efficient SIC in a $3$-user cluster can be given by \eqref{dl1} and \eqref{dl2}.  

Based on the above illustration,
the necessary power constraints for efficient SIC in an $m$-user NOMA cluster can be expressed as follows:
\begin{equation}
P_{i}\gamma_{i-1} - \sum\limits_{j = 1}^{i-1}P_{j}\gamma_{i-1} \geq P_{tol}, \, i = 2,3,\cdots,m.
\label{dm}
\end{equation}
Consequently, an important conclusion about the transmit power of highest channel gain user in a NOMA cluster can be  derived as given in the following.

\begin{lemma}[Maximum Transmit Power for the Highest Channel Gain User in a Downlink NOMA Cluster]
The maximum transmission power allocation to the highest channel gain user in the downlink NOMA cluster must be smaller than $\frac{P_t}{2^{m-1}}$, where $m$ is the number of users in the cluster and $P_t$ is the total transmission power budget for the given NOMA cluster.
\end{lemma}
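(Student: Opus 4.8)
The plan is to exploit the recursive structure hidden in the SIC power constraints \eqref{dm}. First I would observe that because $P_{tol}>0$ and every normalized channel gain $\gamma_{i-1}>0$, dividing \eqref{dm} through by $\gamma_{i-1}$ turns each SIC condition into the strict inequality
\begin{equation}
P_i > \sum_{j=1}^{i-1} P_j, \quad i = 2, 3, \ldots, m.
\label{plan:strict}
\end{equation}
In words, the transmit power of each user strictly exceeds the aggregate transmit power allocated to all users with relatively stronger channel gains, which is the qualitative observation already made in the text following \eqref{dm}. The goal is to sharpen this into a quantitative bound on $P_1$, the power of the highest channel gain user.

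The key step is to introduce the partial sum $S_k := \sum_{j=1}^k P_j$ of the powers allocated to the $k$ strongest users, so that $S_1 = P_1$ and $P_i = S_i - S_{i-1}$. Substituting into \eqref{plan:strict} gives $S_i - S_{i-1} > S_{i-1}$, that is,
\begin{equation}
S_i > 2\,S_{i-1}, \quad i = 2, 3, \ldots, m.
\label{plan:double}
\end{equation}
Thus each partial sum more than doubles at every step. Iterating \eqref{plan:double} from $i=2$ up to $i=m$ then yields $S_m > 2^{m-1} S_1 = 2^{m-1} P_1$.

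Finally I would invoke the cluster power budget $S_m = \sum_{j=1}^m P_j \le P_t$, which combined with $S_m > 2^{m-1} P_1$ gives $2^{m-1} P_1 < P_t$, and hence $P_1 < \frac{P_t}{2^{m-1}}$ as claimed. I do not anticipate a genuine obstacle: the only nontrivial idea is recognizing that the additive, telescoping SIC constraints collapse into a clean geometric doubling once the problem is recast in terms of the partial sums $S_k$. One point worth stating explicitly is that strictness is preserved throughout (precisely because $P_{tol}>0$), which is what produces the strict ``smaller than'' in the bound rather than a weak inequality.
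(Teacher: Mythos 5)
Your proof is correct, and it takes a genuinely different route from the paper's. The paper proceeds by working out small cases explicitly: it keeps the SIC tolerance terms $\delta = P_{tol}/\gamma$ in play, derives $P_{1(max)} \leq \frac{P_t-\delta}{2}$ for $m=2$, then $P_{1(max)} \leq \frac{P_t-\delta}{2^2} - \frac{\delta}{2}$ for $m=3$, extrapolates this pattern to general $m$, and finally argues that the resulting expression is approximately $\frac{P_t}{2^{m-1}}$ because $\delta$ is small. You instead discard the tolerance terms immediately, retaining only the strict inequality $P_i > \sum_{j=1}^{i-1} P_j$ implied by \eqref{dm}, and recast the constraints as a doubling recursion $S_i > 2S_{i-1}$ on the partial sums $S_k = \sum_{j=1}^{k} P_j$, which telescopes to $P_1 < \frac{P_t}{2^{m-1}}$ exactly. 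What each buys: the paper's computation carries the explicit $P_{tol}$ correction terms, so it quantifies how far below $\frac{P_t}{2^{m-1}}$ the allocation must actually sit (at the cost of a pattern-extrapolation step and a final approximation that is not, strictly speaking, a proof of the stated inequality); your argument is shorter, holds rigorously for arbitrary $m$ with no case analysis or asymptotic hand-waving, and delivers the strict bound in the lemma verbatim. One could even combine the two: running your partial-sum recursion without dropping the tolerance terms recovers the paper's sharper bound rigorously.
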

\begin{proof}
The proof follows by induction. Let us consider a $2$-user downlink NOMA cluster where $\gamma_1$ and $\gamma_2$ are the normalized channel gains of the high and low channel gain users, respectively. As per the SIC constraints in \eqref{dl1}, we have
\begin{align*}
P_2\gamma_1 - P_1\gamma_1 \geq P_{tol}, \quad
\text{and} \quad P_1+P_2 \leq P_{t},
\end{align*}
where $P_1$ and $P_2$ are the allocated powers for high and low channel users, respectively.  The maximum allocated power to the highest channel gain user can thus be derived as
\begin{align*}
P_{1(max)} \leq \frac{P_t-\delta}{2},
\end{align*}
where $\delta=\frac{P_{tol}}{\gamma_1}$ is the minimum power difference needed  for SIC. Note that the value of $\delta$ can be very small when the value of  $\gamma_1$ is very high, which is usually the case. Similarly, for $3$-user downlink NOMA cluster, the maximum allocated transmit powers for second higher and highest channel gain user can be expressed, respectively, as
\begin{align*}
P_{2(max)} &\leq \frac{P_t - \delta}{2}, \\
P_{1(max)} 
&\leq \frac{P_t -\delta}{2^2}-\frac{\delta}{2}.
\end{align*}
Consequently, for an $m$-user cluster, we have
\begin{align*}
P_{1(max)} \leq \frac{P_t-\delta}{2^{m-1}}-\frac{\delta}{2^{m-2}}-\cdots- -\frac{\delta}{2} \approx \frac{P_t}{2^{m-1}}. \end{align*}
This is the same result as given in {\bf Lemma~1}.
\end{proof}

\subsection{Uplink NOMA}
The operation of uplink NOMA is quite different from that of downlink NOMA. In uplink NOMA, multiple transmitters (UEs) non-orthogonally transmit to a single receiver (BS) on the same radio spectrum (i.e., channel). Each UE independently transmits its own signal   at either maximum transmit power or controlled transmit power.  All received signals at the BS  are the desired signals, although they cause interference to each other. Since the transmitters are different, each received signal at SIC receiver (i.e., the BS) experiences distinct channel gain. Note that, to apply SIC  and decode signals at BS, we need to maintain the distinctness among various message signals. As such, conventional transmit power control (typically intended to equalize the received signal powers of all users) may not be feasible in NOMA-based systems.

Let us consider a general $m$-user uplink NOMA system in which $m$ users transmit to a common BS over the same radio channel, at either maximum transmit power or controlled transmit power. The BS receives the superposed message signal of $m$ different users and applies SIC to decode each signal. 
Since the received signal from the highest channel gain user is likely the strongest at the BS; therefore, this signal is decoded first.
Consequently, the highest channel gain user experiences interference from all other users in the cluster.  Then, the signal for second highest channel gain user  is decoded and so on.  As a result, the highest channel gain user experiences  interference from all users and  the lowest channel gain user enjoys interference-free data rate.

\begin{figure}[h]
\begin{center}
	\includegraphics[width=3.6 in]{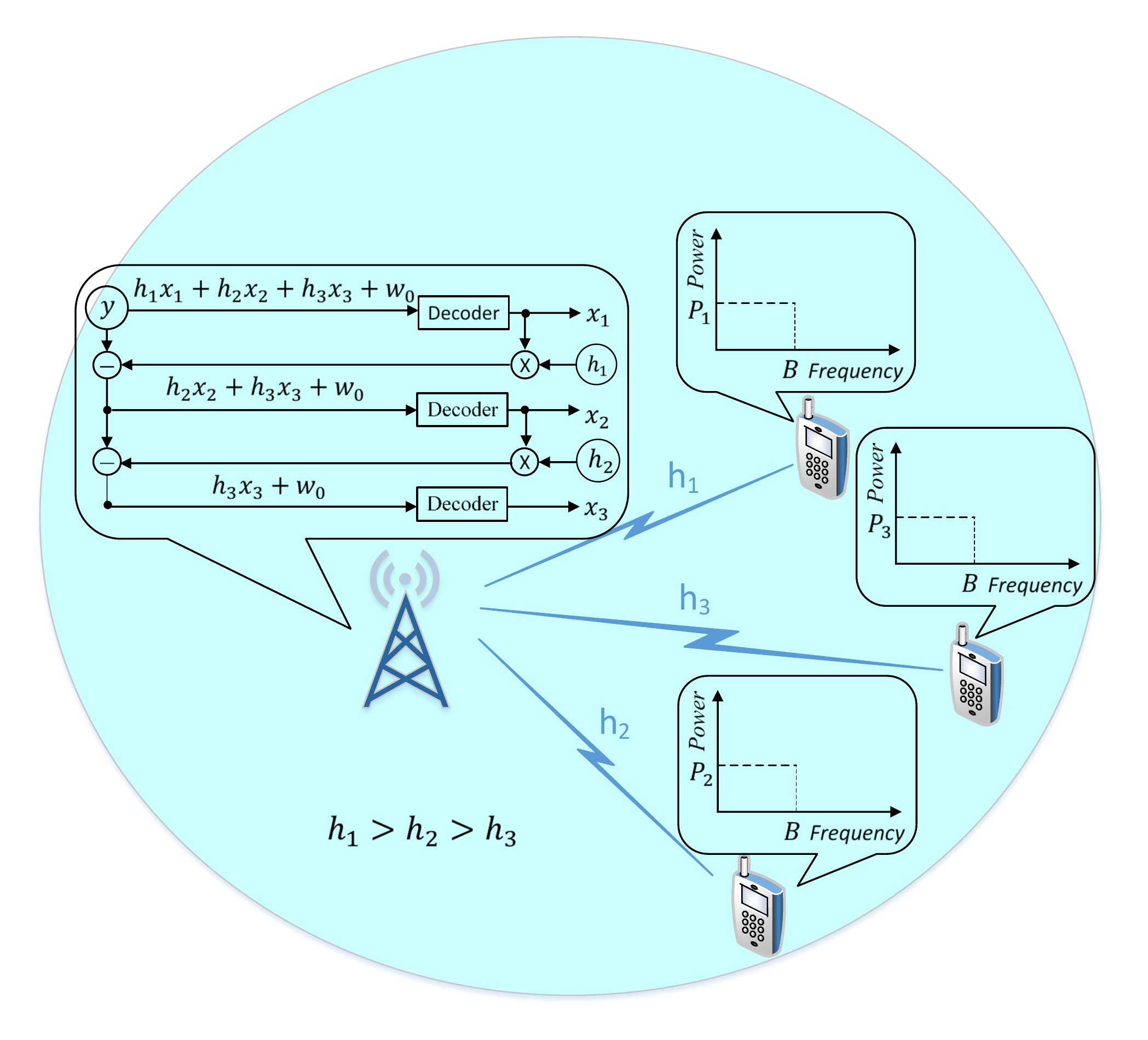}
	\caption{Illustration of a $3$-user uplink NOMA cluster with SIC at UEs.}
	\label{fig:alpha3-up}
 \end{center}
\end{figure}

{\bf Illustration:} 
Fig. \ref{fig:alpha3-up} illustrates a $3$-user uplink NOMA cluster in which $UE_1$, $UE_2$, and $UE_3$ experience channel gains of $h_1$, $h_2$, and $h_3$, respectively, where $h_1>h_2>h_3$.  In uplink NOMA, the user's signal with the highest channel gain   is decoded first at the BS. Thus, the achievable data rate of $UE_1$ depends on the interference from $UE_2$ and $UE_3$, whereas $UE_3$ achieves interference free data rate. 
The data rate of $UE_2$ depends on the interference from $UE_3$.
Consequently, the achievable data rate (or throughput) for $UE_i, \,\forall\,i=1,2,3$, in a $3$-user uplink NOMA cluster can be expressed as  
\begin{align}
\hat R_i = \omega B \log_2\Bigg(1+\frac{P_i \gamma_i}{\sum\limits_{j = i+1}^{3} P_j\gamma_j+\omega}\Bigg),\,\forall\,i=1,2,3,
\label{ugul1}
\end{align}
where $\gamma, \omega,$  and $B$ are defined similarly as in downlink  NOMA. 
If $P_1$, $P_2$, and $P_3$ are the transmission powers of $UE_1$, $UE_2$, and $UE_3$, respectively, then the following conditions need to be satisfied for efficient SIC at BS, i.e.,
\begin{align}
P_1\gamma_1 - P_2\gamma_2 - P_3\gamma_3\geq P_{tol},  \label{u1}\\
P_2\gamma_2 - P_3\gamma_3\geq P_{tol}, \qquad 
\label{u2}
\enspace
\end{align}
where $P_i\leq P^\prime_t, \, \forall \, i$ and $P^\prime_t$ is the maximum transmit power budget of each UE. \eqref{u1} and \eqref{u2} represent the necessary conditions for efficient decoding of $UE_1$ and $UE_2$ signals, respectively, prior to decoding the signal of $UE_3$.  

Based on the above example,
the necessary power constraints for efficient SIC, in an $m$-user uplink NOMA cluster can be expressed as follows:
\begin{equation}
P_{i} \gamma_{i} - \sum\limits_{j=i+1}^{m}P_j\gamma_j \geq  P_{tol}, \, i=1,2,\cdots,(m-1).
\label{um}
\end{equation}

\section{System Model and Problem Formulation}
\subsection{Network Model and Assumptions}
We consider a macro base station (BS) serving $N$ uniformly distributed UEs for uplink as well as downlink. The BS and each of the UEs use a single antenna configuration. The available system bandwidth $B_T$ is divided into frequency resource blocks, each of  bandwidth $B$. That is, the total number of frequency resource blocks are given as $\Omega=B_T/B$.   Users who are non-orthogonally scheduled over the same resource blocks form a NOMA cluster. However, each NOMA cluster operates on orthogonal frequency resource blocks.   The number of users per NOMA cluster is represented by $m$ which ranges from $2 \leq m \leq N$. Also, the resource blocks allocated per cluster are represented by $\omega$, where $1\leq \omega \leq \Omega$.

Provided the range of $m$, the number of clusters can vary between 1 and $N/2$. The maximum BS transmission power budget is $P_T$, the maximum transmission power budget per downlink NOMA cluster is $P_t$,  and the maximum user transmit power is $P^\prime_t$. The normalized channel gain between  $i$-th UE and the BS is represented by $\gamma_i$ which accounts for  both distance-based path-loss and shadowing. The users are sorted according to the descending order of their normalized channel gains as $\gamma_1 > \gamma_2 > \gamma_3 >\cdots> \gamma_{N}$. 

Now, let us define a variable $\beta_{i,j}$ as follows:
\begin{equation}
\beta_{i,j}=
\begin{cases}
1, & \text{if a user $i$ is grouped into cluster $j$}\\
0, & \text{otherwise}
\end{cases}
\end{equation}
where $j=1,2,, \cdots, N/2$ and $i=1,2,\cdots,N$.

\subsection{Problem Formulation: Downlink NOMA}

The joint user clustering (i.e., grouping of users into clusters) and power allocation problem for the throughput maximization in downlink NOMA can be formulated as
\begin{align*}
&\underset{\omega,~\bm \beta, \bm P}
{\text{maximize}}
\Scale[1.1]{\sum\limits_{j = 1}^{N/2} \sum\limits_{i = 1}^{N}
\omega_j
\beta_{i,j}\log_2\Bigg(1+\frac{P_i \gamma_i}{\sum\limits_{k = 1}^{i-1} \beta_{k,j} P_k \gamma_i + \omega_j}\Bigg)}\nonumber\\
&\text{subject to:}
\enspace \bm{\mathrm{C_1:}}~\sum\limits_{j = 1}^{N/2} \sum\limits_{i = 1}^{N}
\beta_{i,j} P_{i} \leq P_{T}, \nonumber\\
&\enspace \bm{\mathrm{C_2:}} \sum\limits_{j = 1}^{N/2} 
\omega_j
\beta_{i,j}\log_2\Bigg(1+\frac{P_i \gamma_i}{\sum\limits_{k = 1}^{i-1} \beta_{k,j} P_k \gamma_i +\omega_j}\Bigg) >  R_i,  \forall\, i,\nonumber\\
& \enspace \bm{\mathrm{C_3:}} \,
\Bigg(\beta_{i,j}P_{i}-\sum\limits_{k = 1}^{i-1} \beta_{k,j} P_k\Bigg) \gamma_{i-1}
\geq P_{tol}, \, \forall\, i,
\nonumber\\
 &\enspace \bm{\mathrm{C_4:}} \,\Bigg(\sum\limits_{j=1}^{N/2} \beta_{i,j} =1, \, \forall \, i\Bigg) \\
&\hspace{30pt}\text{AND}\,\Bigg( \Bigg( \, 2 \leq \sum\limits_{i=1}^{N} \beta_{i,j} \leq N\Bigg)\, \text{OR}\,\Bigg(\sum\limits_{i=1}^{N} \beta_{i,j} =0, \,  \forall \, j \Bigg)\Bigg)
\nonumber\\
& \enspace \bm{\mathrm{C_5:}} \, \, \sum\limits_{j=1}^{N/2} \beta_{i,j} \omega_j \leq \Omega, \,\forall\,i, 
 \enspace \bm{\mathrm{C_6:}} \, \, \omega_j \in \{1,2,\cdots, \Omega\}, \,  \forall \,j,  \\
 & \enspace \bm{\mathrm{C_7:}} \,\, \beta_{i,j} \in \{0,1\}, \, \forall \,i,j,
\end{align*}
where $R_{i}$ is the minimum data rate requirement for $i$-th user. The {\bf Constraint $\mathrm{\bf{C_1}}$} denotes the  total power constraint of the BS, {\bf Constraint $\mathrm{\bf{C_2}}$} ensures the minimum downlink data rate requirements of the users, {\bf Constraint $\mathrm{\bf{C_3}}$} denotes the SIC constraints as discussed in Section~II.A,  {\bf Constraint $\mathrm{\bf{C_4}}$} ensures that one user can be assigned to at most one cluster, while at least two users are grouped into each downlink NOMA cluster, {\bf Constraint $\mathrm{\bf{C_5}}$} provides the total downlink frequency resource constraint. In addition, {\bf Constraint $\mathrm{\bf{C_6}}$} and $\mathrm{\bf{C_7}}$ demonstrate that $\omega$ and $\beta$ are integer variables. 

\subsection{Problem Formulation: Uplink NOMA}
Similarly, the joint user clustering and power allocation problem for throughput maximization of an uplink NOMA system can be formulated as follows:
\begin{align*}\label{P2}
&\underset{\omega,~\bm \beta, \bm P}
{\text{maximize}}
\sum\limits_{j = 1}^{N/2} \sum\limits_{i = 1}^{N}
\omega_j
\beta_{i,j}\Scale[1.2]{\log_2\Bigg(1+\frac{P_i \gamma_i}{\sum\limits_{k = i+1}^{N} \beta_{k,j} P_k \gamma_k +\omega_j}\Bigg)}
\\
&\text{subject to:}
\enspace \bm{\mathrm{C^\prime_1:}} \sum\limits_{j = 1}^{N/2} 
\beta_{i,j} P_{i} \leq P^\prime_t, \forall \, i, \nonumber\\ 
& \enspace\bm{\mathrm{C^\prime_2:}} \sum\limits_{j = 1}^{N/2} 
\omega_j
\beta_{i,j}\log_2\Bigg(1+\frac{P_i \gamma_i}{\sum\limits_{k = i+1}^{N} \beta_{k,j} P_k \gamma_k +\omega_j}\Bigg) \Scale[1]{> R_i^\prime,  \forall \, i,} 
\nonumber \\
&\enspace \bm{\mathrm{C^\prime_3:}} \,\,P_{i} \gamma_{i} \beta_{i,j} - \sum\limits_{k=i+1}^{N}  \beta_{k,j} P_k \gamma_k \geq  P_{tol}, \forall\, i,
\nonumber
\end{align*}
\begin{align*}
& \enspace \bm{\mathrm{C^\prime_4:}} \,\Bigg(\sum\limits_{j=1}^{N/2} \beta_{i,j} =1, \, \forall \, i\Bigg) \\
&\hspace{30pt}\text{AND}\,\Bigg( \Bigg( \, 2 \leq \sum\limits_{i=1}^{N} \beta_{i,j} \leq N\Bigg)\, \text{OR}\,\Bigg(\sum\limits_{i=1}^{N} \beta_{i,j} =0, \,  \forall \, j \Bigg)\Bigg)
\nonumber\\
& \enspace \bm{\mathrm{C^\prime_5:}} \, \, \sum\limits_{j=1}^{N/2} \beta_{i,j} \omega_j \leq \Omega, \,\forall\,i, 
 \enspace \bm{\mathrm{C^\prime_6:}} \, \, \omega_j \in \{1,2,\cdots, \Omega\}, \,  \forall \,j,  \\
 & \enspace \bm{\mathrm{C^\prime_7:}} \,\, \beta_{i,j} \in \{0,1\}, \, \forall \,i,j,
\end{align*}
where $R^\prime_{i}$ is the minimum uplink data rate requirement for $i$-th user. 
{\bf Constraint $\mathrm{\bf{C^\prime_1}}$} ensures the minimum rate requirements of the users, {\bf Constraint $\mathrm{\bf{C^\prime_2}}$} ensures the minimum uplink data rate requirements of the users,  {\bf Constraint $\mathrm{\bf{C^\prime_3}}$} ensures the SIC constraint as discussed in Section~II.B, and
{\bf Constraints $\mathrm{\bf{C^\prime_4}}-\mathrm{\bf{C^\prime_7}}$} are same as defined in Section~III.B.

\subsection{Solution Methodology}
As can be seen, the formulated problems are mixed integer non-linear programming (MINLP) problems whose solution is combinatorial by nature.
Specifically, for throughput maximization, the optimal user clustering solution  requires an exhaustive search to form a NOMA cluster~\cite{li2015}. That is, for every single user, we need to consider all possible combinations of user grouping. For example, let us consider an uplink/downlink NOMA system with $N$  users. In such a system, the  number of possible combinations  for optimal user clustering, can be expressed as follows:
\begin{align*}
\Phi = \mathlarger{\mathlarger{\sum}}_{i = 2}^{N} {N \choose i}. 
\end{align*}
Evidently, the computational complexity of optimal user clustering may not be affordable for practical systems with a large number of active users.
As such, we  resort to solve the problem in two steps, i.e., by developing a less complex solution for grouping users into different NOMA clusters and then optimizing their respective powers  to maximize the sum throughput per cluster. Subsequently, Section~IV details the proposed low-complexity user clustering scheme. Given the user clustering, we derive optimal power allocations for users  in Section~V.

\section{User Clustering in NOMA}
In this section, we propose a low-complexity sub-optimal user clustering scheme for both uplink and downlink NOMA systems. The proposed scheme exploits the channel gain differences among users and aims at  enhancing the sum-throughput  of the considered cell. Prior to user grouping, this scheme relies on selecting a feasible number of  clusters, i.e., decides the number of clusters and in turn the number of users per cluster. Once the number of users in a cluster is decided,  user grouping is performed. In the following, the key concepts of our proposed user paring policies and their algorithmic presentation are detailed.


\subsection{Key Issues for User Clustering in Downlink NOMA}
Let us consider an $m$-user downlink NOMA cluster to which $\omega$ units of resource blocks are allocated. For such a system, the achievable per-user throughput in (\ref{ugdl2}), where $i = 1, 2,3,\cdots,m$, provides necessary insights to group users into a cluster. These insights are discussed below:

\renewcommand{\labelitemi}{$\blacksquare$}
\begin{itemize}
\item After SIC, the throughput of the highest channel gain user in a cluster is not subject to the intra-cluster interference; instead, its throughput depends on its own channel gain and  power. Although the allocated transmit power for the highest channel gain  user is low (as also mentioned in Section~II.A),  its impact on the throughput is minimal. 
Subsequently, if the gain of the highest channel is sufficiently high, then the achievable data rate negligibly depends on the transmission power, unless the power is very low.
Thus it is  beneficial to distribute the high channel gain users in a cell into  different NOMA clusters, as they can significantly contribute to the sum-throughput of a cluster.

\item To increase the throughput of the  users with low channel gains, it is useful to pair them with the high channel gain users. The reason is that the high channel gain users can achieve a higher rate even with the  low power levels while making the large fraction of power available for weak channel users. As such, the key point of our proposed  user clustering in downlink NOMA is to pair the highest channel gain user and the lowest channel gain user into same NOMA cluster, while the second highest channel gain user and the second lowest channel gain user into another NOMA cluster, and so on. 

\item The throughput of the remaining users in a NOMA cluster follows the   same format. That is, the SINR contains same channel in both the denominator and numerator, whereas the transmit power in the numerator is greater than the sum power in the denominator (this is given by the SIC constraint). As such, the throughput of the remaining users in a NOMA cluster depends mainly on the distribution of the transmit power levels.  

\end{itemize}

\renewcommand{\labelitemi}{$\blacksquare$}

\subsection{Key Issues for User Clustering in Uplink NOMA}
Let us consider an $m$-user uplink NOMA cluster to which  $\omega$ units of resources are allocated. For such a system, the achievable per-user throughput in (\ref{ugul1}), where $i = 1, 2,3,\cdots,m$, provides necessary insights to group users. These insights are discussed below: 
\renewcommand{\labelitemi}{$\blacksquare$}
\begin{itemize}
\item In an uplink NOMA cluster, all users' signals experience distinct channel gains. To perform SIC at BS, we need to maintain the distinctness of received signals. As such, the conventional transmit power control may not  be feasible  in a NOMA cluster. Further, contrary to downlink NOMA, 
 the power control at any user doesn't increase the power budget for any other user in a cluster. As such, the ultimate result of  power control is sum-throughput degradation. 
\item  The distinctness among the channels of different users within a NOMA cluster is crucial to minimize inter-user interference and thus to maximize the cluster throughput. 
\item In  uplink NOMA, the highest channel gain user does not interfere to weak channel users (actually his interference is canceled by SIC).  Therefore, this user can transmit with maximum power to achieve a higher throughput. It is thus beneficial to include the high channel gain users transmitting with maximum powers in each NOMA cluster, as they can significantly contribute to the throughput of a cluster.
\end{itemize}

\subsection{User Clustering Algorithm}
Based on the above discussions, let we classify the users into two classes: Class-A and Class-B. The number of users in Class-A, denoted as $\alpha$, have much higher channel gains compared to users in Class-B, i.e., 
\begin{equation*}
\gamma_1, \gamma_2, \cdots, \gamma_\alpha \gg \gamma_{\alpha+1}, \gamma_{\alpha+2}, \cdots, \gamma_N.
\end{equation*}
Then, the proposed sub-optimal user clustering algorithm can be given as in {\bf Algorithm~1} below. 
\\
\rule[.1ex]{\linewidth}{1.5pt}
\Scale[.95]{\textbf{Algorithm 1: User Clustering in Downlink and Uplink NOMA}}
\rule[.8ex]{\linewidth}{.5pt}
\begin{enumerate}
\item[\textbf{1.}] \textbf{Sort users:} $\gamma_1 \geq \gamma_2 \geq \cdots \geq \gamma_\alpha \gg \gamma_{\alpha+1} \geq \gamma_{\alpha+2} \geq \cdots \geq \gamma_N$.
\vspace{6pt}\item[\textbf{2.}] \textbf{Select no. of clusters:} $\text{if}\enspace (\alpha < N/2) \text{ then}\\  \text{\hspace{100pt} number of clusters, }\kappa = \alpha \\
\text{\hspace{95pt} else if}\enspace (\alpha \geq N/2) \text{ then} \\
\text{\hspace{100pt} number of clusters, }\kappa = N/2.$
\vspace{6pt}
\hspace{6pt}\item[\textbf{3.}] 
\textbf{(a) Group users into clusters for  downlink NOMA:} \\
$1$st cluster $=\{\gamma_1, \gamma_{\kappa+1},\gamma_{2\kappa+1}, \cdots,\gamma_N\}$, \\
$2$nd cluster $=\{\gamma_2, \gamma_{\kappa+2}, \gamma_{2\kappa+2},\cdots,\gamma_{N-1}\},\cdots$,\\ $\kappa$-th cluster $=\{\gamma_{\kappa}, \gamma_{2\kappa}, \gamma_{3\kappa},\cdots,\gamma_{N-\kappa-1}\}$.

\vspace{6pt} \textbf{(b) Group users into clusters for uplink NOMA:} \\
$1$st cluster $=\{\gamma_1, \gamma_{\kappa+1},\gamma_{2\kappa+1}, \cdots,\gamma_{N-\kappa-1}\},$\\
$2$nd cluster $=\{\gamma_2, \gamma_{\kappa+2}, \gamma_{2\kappa+2},\cdots,\gamma_{N-\kappa-2}\},\cdots,$ \\
$\kappa$-th cluster $=\{\gamma_{\kappa}, \gamma_{2\kappa}, \gamma_{3\kappa},\cdots,\gamma_{N}\}$.

\vspace{6pt}\item[\textbf{4.}] \textbf{Cluster size:} $\text{if}\enspace (N \textbf{ mod } \kappa == 0)  \text{ then}\\  
\text{\hspace{67pt} uniform cluster size} \\
\text{\hspace{58pt} else if}\enspace (N \textbf{ mod } \kappa \neq 0) \text{ then}\\    
\text{\hspace{67pt} different cluster size.}$
\end{enumerate}
\rule[1.5ex]{\linewidth}{1.5pt}

To illustrate, in Fig.~3 and Fig.~4, we show user grouping  for $2$-user, $3$-user, and $4$-user  NOMA clusters in downlink and uplink, respectively, where the total number of users are taken as 12. 

\begin{figure}[h]
\begin{center}
	\includegraphics[width=3.5 in]{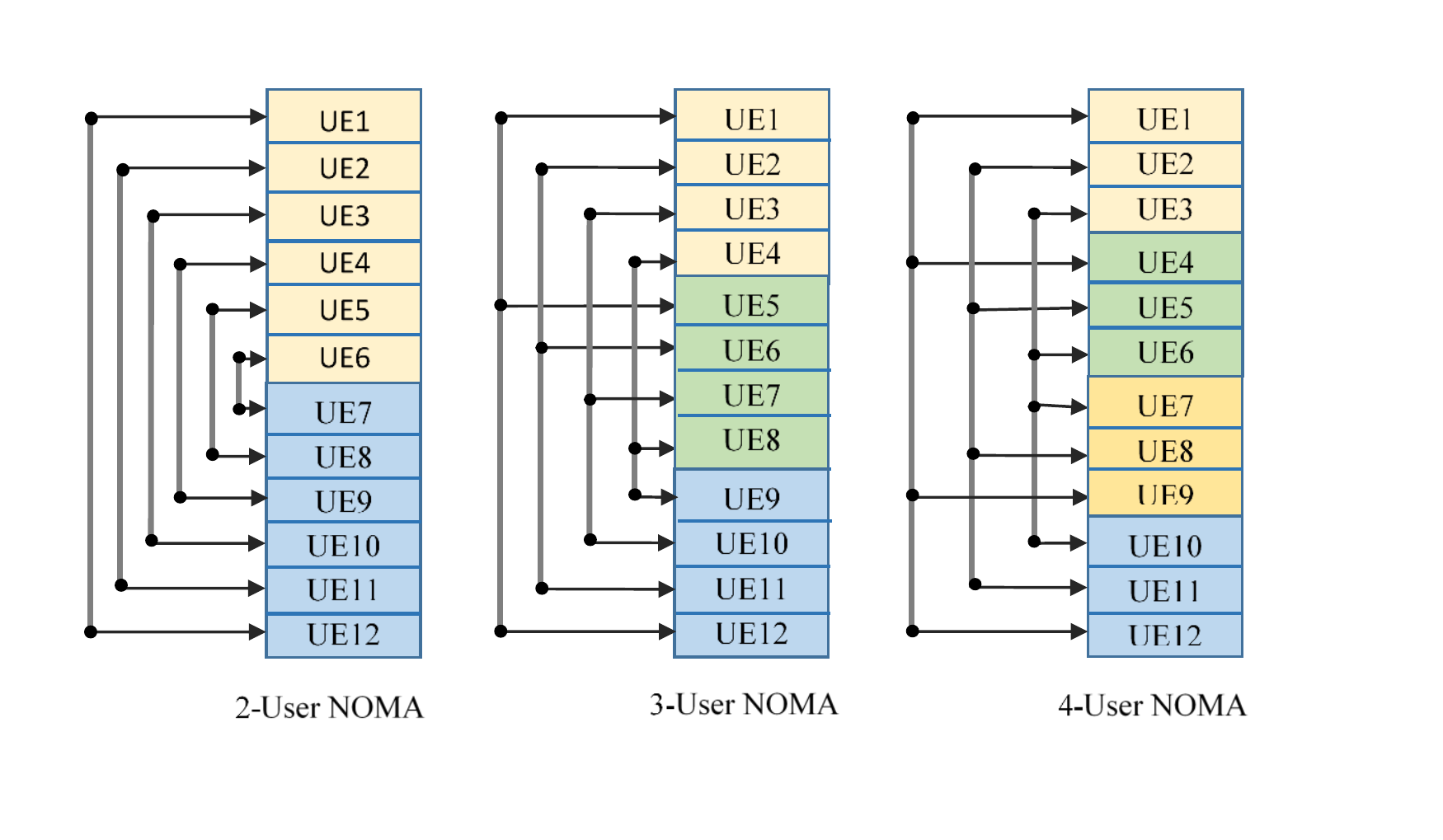}
	\caption{Illustration of $2$-user, $3$-user, and $4$-user NOMA clustering for downlink transmission to 12 active users in a cell.}
	\label{fig:alpha3}
 \end{center}
\end{figure}

\begin{figure}[h]
\begin{center}
	\includegraphics[width=3.5 in]{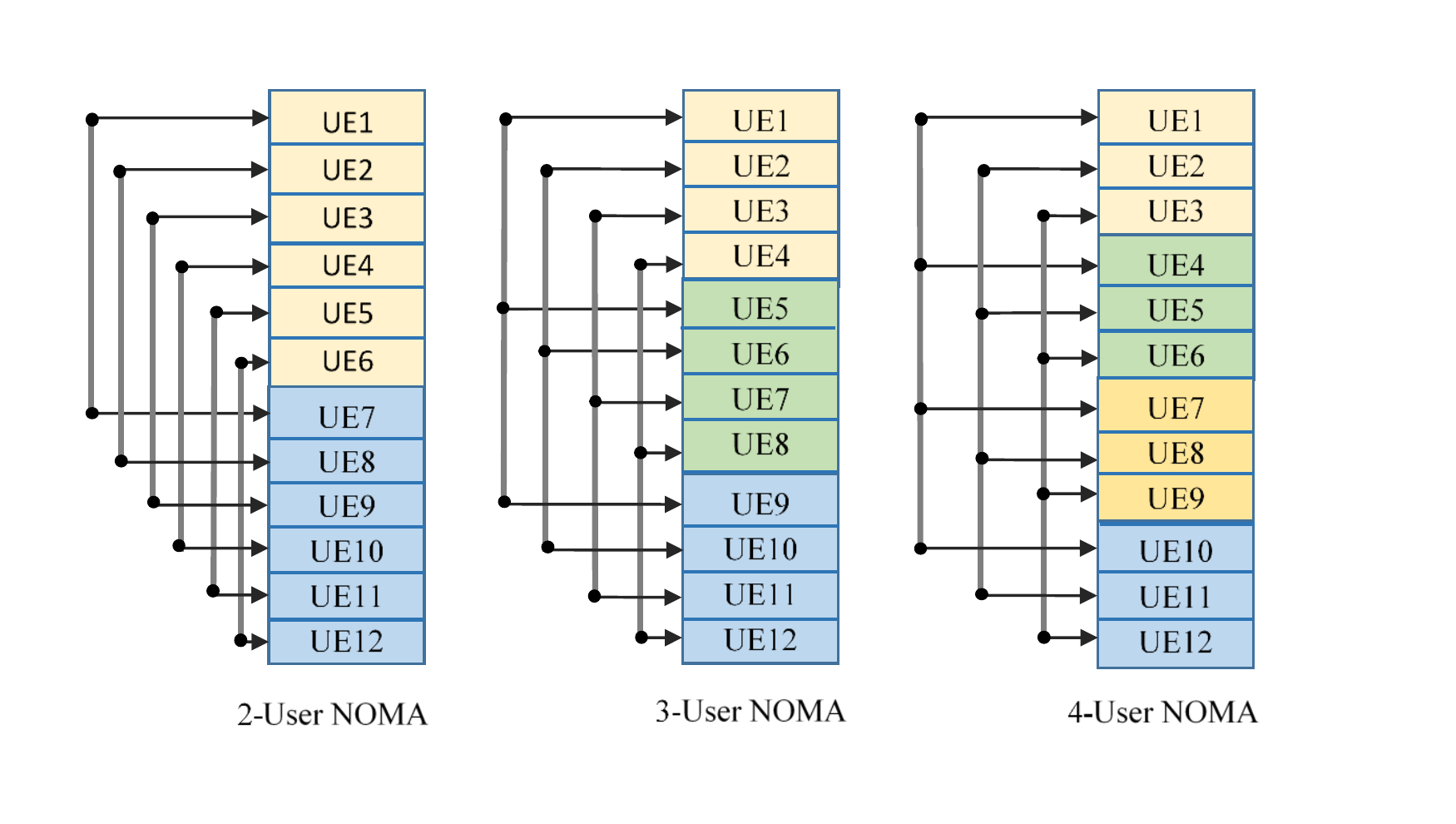}
	\caption{Illustration of $2$-user, $3$-user, and $4$-user NOMA for uplink transmission of 12 active users in a cell.}
	\label{fig:alpha3}
 \end{center}
\end{figure}

\section{Optimal Power Allocations in  NOMA}

Given the NOMA clusters obtained from Section~IV, in this section, we derive optimal power allocations for a NOMA cluster with  $m$ users in the both downlink and uplink transmission scenarios, where $2\leq m \leq N$. Closed-form expressions for the optimal power allocations are derived using KKT optimality conditions. 

\subsection{Downlink NOMA} 
\subsubsection{\textbf{Problem Formulation}}
Let us consider an $m$-user downlink NOMA cluster, where the normalized channel gains of $UE_1, UE_2,\cdots, UE_m$ are assumed as $\gamma_1, \gamma_2,\cdots,\gamma_m$, respectively, and their respective minimum rate requirements are $R_1, R_2,\cdots,R_m$, where $R_i>0$. It is assumed that $\omega$  resource blocks, each of bandwidth $B$, are allocated to the downlink NOMA cluster. If $P_1, P_2,\cdots,P_m$ are the transmission powers for $UE_1, UE_2,\cdots,UE_m$, respectively, then the optimal power allocation problem can be given as
\begin{align*} 
&\quad\underset{P}{\text{max}} \quad \omega B \mathlarger{\mathlarger{\sum}}_{i = 1}^{m}\log_2\Bigg(1+\frac{P_i \gamma_i}{\sum\limits_{j = 1}^{i-1} P_j\gamma_i+\omega}\Bigg)\nonumber 
\\ 
&\text{subject to:}\enspace
\bm{\mathrm{C_1:}}\,\,\sum\limits_{i = 1}^{m}P_i \leq P_t, \nonumber 
\\ 
&\quad\,\, \bm{\mathrm{C_2:}}\,\,\, \omega B\log_2\Bigg(1+\frac{P_i \gamma_i}{\sum\limits_{j = 1}^{i-1} P_j\gamma_i + \omega}\Bigg) \geq R_i,  \forall\, i, \quad\nonumber\\
&\quad\,\, \bm{\mathrm{C_3:}}\,\,\,
P_{i}\gamma_{i-1} - \sum\limits_{j = 1}^{i-1}P_{j}\gamma_{i-1} \geq P_{tol}, \,\forall\, i = 2,3,\cdots,m, 
\end{align*}
where $\sum_{j = 1}^{i-1} P_j\gamma_i$ is the inter-user interference for $i$-th user in downlink NOMA cluster. {\bf Constraint ${\mathrm{\bf{C_1}}}$ } is the total power constraint, {\bf Constraint ${\mathrm{\bf{C_2}}}$ }  is  the minimum rate requirement per user, and {\bf Constraint ${\mathrm{\bf{C_3}}}$ } denotes the SIC constraints. Note that the aforementioned problem  is convex under  {\bf Constraints~$\mathrm{\bf{C_1}}-\mathrm{\bf{C_3}}$}. 

\subsubsection{\textbf{Closed-Form Optimal Power Solution}}

For the aforementioned problem, the Lagrangian can be expressed as: 
\begin{align}
\mathcal{L}&(P,\lambda,\mu,\psi) = \omega B\mathlarger{\mathlarger{‎‎\sum}}_{i = 1}^{m}\log_2\Bigg(1+\frac{P_i\gamma_i}{\sum\limits_{j = 1}^{i-1} P_j\gamma_i+\omega}\Bigg)+ \nonumber\\
&\lambda\Big(P_t-\sum\limits_{i=1}^{m}P_i\Big)+\sum\limits_{i=1}^{m}\mu_i\Big\{P_i\gamma_i-\Big(\sum\limits_{k=1}^{i-1}\ P_k\gamma_i - \omega\Big)\times \nonumber\\
&\Big(\varphi_i-1\Big)\Big\}+ \sum\limits_{i=2}^{m}\psi_i\Big(P_{i}\gamma_{i-1}-\sum\limits_{l=1}^{i}P_l\gamma_{i-1}-P_{tol}\Big),
\label{opd1}
\end{align}
where $\lambda$, $\mu_i$, and $\psi_j$ are the Lagrange multipliers, $\forall \,i = 1,2,3,\cdots,m$ and $\forall \,j = 2,3,4,\cdots,m$. Also, $\varphi_i = 2^{\frac{R_i}{ \omega B}}$. Taking derivatives of (\ref{opd1}) w.r.t. $P_i$, $\lambda$, $\mu_i$,  $\psi_j$, we can write Karush-Kuhn-Tucker (KKT) conditions as follows:
\begin{align}
&\frac{\partial \mathcal{L}}{\partial P_1^\ast}\Scale[1.08]{=\frac{\omega B\gamma_1}{P_1\gamma_1+\omega} - \mathlarger{\mathlarger{\sum}}_{k=2}^m\frac{\omega BP_k\gamma_k^2}{\bigg(\sum\limits_{l=1}^{k}P_l\gamma_k+\omega\bigg)\bigg(\sum\limits_{l^\prime=1}^{k-1}P_{l^\prime}\gamma_k+\omega\bigg)}}\Scale[1.01]{ -\lambda +}\nonumber \\
&\Scale[1]{\mu_1\gamma_1-\sum\limits_{j=2}^{m}(\varphi_j-1)\mu_j\gamma_j  
-\sum\limits_{j=2}^{m}\psi_j\gamma_{j-1} \leq 0,\, \textrm{if} \,P_1^\ast \geq 0},
\label{opd2}
\end{align}
\begin{align}
\frac{\partial \mathcal{L}}{\partial P_i^\ast}&\Scale[1.08]{=\frac{\omega B\gamma_i}{\sum\limits_{j=1}^{i} P_j\gamma_i+\omega} - \mathlarger{\mathlarger{\sum}}_{k=i+1}^m\frac{\omega BP_k\gamma_k^2}{\bigg(\sum\limits_{l=1}^{k}P_l\gamma_k+\omega\bigg)(\sum\limits_{l^\prime=1}^{k-1}P_{l^\prime}\gamma_k+\omega\bigg)} -} \nonumber \\
&\lambda+\mu_i\gamma_i- \sum\limits_{k=i+1}^{m}(\varphi_k-1)\mu_k\gamma_k+\psi_{i}\gamma_{i-1}- \nonumber \\
&\sum\limits_{j=i+1}^{m}\psi_j\gamma_{j-1} \leq 0 ,\textrm{if} \enspace P_i^\ast \geq 0,\, \forall\,i=2,3,\cdots,m,
\label{opd3}
\end{align}
\begin{align}
\frac{\partial \mathcal{L}}{\partial \lambda^\ast}=P_t-\sum\limits_{i=1}^{m}P_i \geq 0,  \,\text{if}\, \lambda^\ast \geq 0, \qquad \qquad \qquad \qquad \, \, 
\label{opd4}  
\end{align}
\begin{align}
\frac{\partial \mathcal{L}}{\partial \mu_i^\ast}=P_i\gamma_i-\Big(\sum\limits_{j=1}^{i-1}P_j\gamma_i+\omega\Big)\Big(\varphi_i-1\Big)\geq 0, \qquad \quad \,\nonumber \\
\textrm{if} \,\mu_i^\ast \geq 0, \, \forall\, i = 1,2,3,\cdots,m, 
\label{opd5} 
\end{align}
\begin{align}
\frac{\partial \mathcal{L}}{\partial \psi_i^\ast}=P_i\gamma_{i-1}-\sum\limits_{j=1}^{i-1}P_j\gamma_{i-1} - P_{tol}\geq 0, \qquad\,\nonumber \qquad \quad\\
\textrm{if} \,\psi_i^\ast \geq 0, \, \forall\, i = 2,3,4,\cdots,m. 
\label{opd6}
\end{align}
In addition, we have several KKT complementarity conditions whose treatment is detailed in the following.

In an $m$-user cluster, there are $2m$ Lagrange multipliers. Thus there are $2^{2m}$ combinations of Lagrange multipliers that need to be checked  for satisfying the KKT conditions~\cite{tianxi}.  However, checking $2^{2m}$ combinations  is computationally complex. For example, if $m=3,4,\cdots,10$, then the number of combinations are $64, 256,\cdots, 1048576$, respectively. In our problem $P_i > 0,\,\forall\,i = 1,2,3,\cdots,m$; therefore, to obtain a fixed number of solutions for $m$ decision variables we need exactly $m$ equations \cite{chong2008}. Thus, all $2^{2m}$ combinations need not to be checked, rather we need to check only ${2m \choose m}$ combinations. After solving for $2$-, $3$-, $4$-, and $6$-user NOMA clusters, we find that the Lagrange multiplier combinations satisfying KKT conditions are $2, 4, 8, 32$, respectively, thus in general $2^{m-1}$.

The Lagrange multipliers for $m$-user downlink NOMA cluster belong to three sets of constraints. These sets are the total transmit power constraints, minimum  data rate constraints, and SIC constraints, given mathematically as,
$A = \{\lambda\}, \quad B = \{\mu_1, \mu_2, \mu_3, \mu_4,\cdots,\mu_m\}, \quad C = \{\psi_2, \psi_3, \psi_4,\cdots,\psi_m\}$, respectively. Therefore, the solution set is, $S = \{\lambda, \mu_2\,\text{or}\,\psi_2, \mu_3\,\text{or}\,\psi_3, \mu_4\,\text{or}\,\psi_4,\cdots,\mu_m \,\text{or}\,\psi_m\}$. 
For example, for $2$-user downlink NOMA, the satisfied KKT conditions are $S_1 = \{\lambda,\mu_2\}$ and $S_2 = \{\lambda,\psi_2\}$. For $3$-user NOMA, the satisfied KKT conditions are $S_1 = \{\lambda,\mu_2, \mu_3\}$, $S_2 = \{\lambda, \mu_2, \psi_3\}$, $S_3 = \{\lambda,\psi_2, \mu_3\}$, and $S_4 = \{\lambda, \psi_2, \psi_3\}$.
Now let us define two additional sets of Lagrange multipliers, $B^\prime = S-B$ and $C^\prime = S-C$. Then the closed-form solution of optimal power allocation to $m$-user downlink NOMA cluster can be given as in the following.

\begin{lemma}[Optimal Power Allocations  for $m$-User Downlink NOMA Cluster]
The closed-form solution of the optimal power allocation for the highest channel gain user in downlink NOMA cluster can be given as follows: 
\begin{align*}
P_1 = \frac{P_t}{\underset{j\not\in B^\prime}{\prod\limits_{j=2}^{m}}\varphi_j \underset{j\in B^\prime}{\prod\limits_{j=2}^{m}}2} - \underset{j\not\in B^\prime}{\mathlarger{\mathlarger{\sum}}_{j=2}^{m}}\frac{\omega(\varphi_j - 1)}{\gamma_j \underset{k\not\in B^\prime}{\prod\limits_{k=2}^{j}}\varphi_k \underset{k\in B^\prime}{\prod\limits_{k=2}^{j}}2} - \\
\underset{j\not\in C^\prime}{\mathlarger{\mathlarger{\sum}}_{j=2}^{m}}\frac{P_{tol}}{2\gamma_{j-1} \underset{k\not\in B^\prime}{\prod\limits_{k=2}^{j-1}}\varphi_k \underset{k\in B^\prime}{\prod\limits_{k=2}^{j-1}}2}.
\end{align*}
On the other hand, the optimal power allocations for remaining users (except the highest channel gain user) are given as
\begin{align*}
\text{(i) If } i\not\in B^\prime, \enspace P_i = \Bigg[\frac{P_t}{\underset{j\not\in B^\prime}{\prod\limits_{j=i}^{m}}\varphi_j \underset{j\in B^\prime}{\prod\limits_{j=i}^{m}}2} - \underset{j\not\in B^\prime}{\mathlarger{\mathlarger{\sum}}_{j=i}^{m}}\frac{\omega(\varphi_j - 1)}{\gamma_j \underset{k\not\in B^\prime}{\prod\limits_{k=i}^{j}}\varphi_k \underset{k\in B^\prime}{\prod\limits_{k=i}^{j}}2} - \\
\underset{j\not\in C^\prime}{\mathlarger{\mathlarger{\sum}}_{j=i}^{m}}\frac{P_{tol}}{2\gamma_{j-1} \underset{k\not\in B^\prime}{\prod\limits_{k=i}^{j-1}}\varphi_k \underset{k\in B^\prime}{\prod\limits_{k=i}^{j-1}}2} + \frac{\omega}{\gamma_i}\Bigg]\times (\varphi_i - 1).
\end{align*}
\begin{align*}
\text{(ii) If } i\in B^\prime, \enspace P_i = \frac{P_t}{\underset{j\not\in B^\prime}{\prod\limits_{j=i}^{m}}\varphi_j \underset{j\in B^\prime}{\prod\limits_{j=i}^{m}}2} - \underset{j\not\in B^\prime}{\mathlarger{\mathlarger{\sum}}_{j=i}^{m}}\frac{\omega(\varphi_j - 1)}{\gamma_j \underset{k\not\in B^\prime}{\prod\limits_{k=i}^{j}}\varphi_k \underset{k\in B^\prime}{\prod\limits_{k=i}^{j}}2} - \\
\underset{j\not\in C^\prime}{\mathlarger{\mathlarger{\sum}}_{j=i}^{m}}\frac{P_{tol}}{2\gamma_{j-1} \underset{k\not\in B^\prime}{\prod\limits_{k=i}^{j-1}}\varphi_k \underset{k\in B^\prime}{\prod\limits_{k=i}^{j-1}}2} + \frac{P_{tol}}{\gamma_{i-1}}.
\end{align*}

\end{lemma}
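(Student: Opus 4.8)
The plan is to lean on the stated convexity of the problem, which makes the KKT system (\ref{opd2})--(\ref{opd6}) both necessary and sufficient for the global optimum, and then to reduce the whole computation to solving one scalar linear recursion. First I would pin down the active set. Since the weakest user's power $P_m$ enters no interference term (user $i$ only sees $\sum_{j<i}P_j\gamma_i$, and $m$ is never an index $j<i$), the objective is strictly increasing in $P_m$ while raising $P_m$ violates neither the SIC nor the rate constraints; hence any slack in $\mathbf{C_1}$ could be removed by increasing $P_m$, so the budget constraint must bind, $\lambda^\ast>0$ and $\sum_{i=1}^{m}P_i=P_t$. The highest-gain user carries no SIC constraint and, being interference-free with the largest $\gamma_i$, clears its rate floor with slack, so $\mu_1^\ast=0$. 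For every remaining user $i\in\{2,\dots,m\}$ there are exactly two candidate binding constraints, the rate floor $\mathbf{C_2}$ (multiplier $\mu_i$) and the SIC inequality $\mathbf{C_3}$ (multiplier $\psi_i$); the equation-counting argument in the text — $m$ decision variables need $m$ binding equations, one of which is $\mathbf{C_1}$ — forces exactly one of each pair to be tight, producing the $2^{m-1}$ admissible sets $S$ encoded by $B^\prime$ and $C^\prime$.

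Next, for a fixed active set I would convert complementary slackness into a recursion. Writing $Q_i=\sum_{j=1}^{i-1}P_j$ for the aggregate power of the stronger users, so that $Q_1=0$, $Q_2=P_1$, and $Q_{m+1}=P_t$ by $\mathbf{C_1}$, each tight constraint expresses $P_i$ affinely in $Q_i$: a tight $\mathbf{C_2}$ gives $P_i=(Q_i+\omega/\gamma_i)(\varphi_i-1)$, whereas a tight $\mathbf{C_3}$ gives $P_i=Q_i+P_{tol}/\gamma_{i-1}$. Substituting into the identity $Q_{i+1}=Q_i+P_i$ yields the first-order linear recursion $Q_{i+1}=a_iQ_i+b_i$ with $(a_i,b_i)=(\varphi_i,\tfrac{\omega}{\gamma_i}(\varphi_i-1))$ when $i\notin B^\prime$ (rate-limited) and $(a_i,b_i)=(2,P_{tol}/\gamma_{i-1})$ when $i\in B^\prime$ (SIC-limited). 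This is exactly where the factor $\varphi_j$ versus the factor $2$ in the statement originates.

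The closed forms then follow from the standard solution of this recursion. Unwinding from index $i$ up to the boundary $Q_{m+1}=P_t$ gives $Q_i=\tfrac{P_t}{\prod_{l=i}^{m}a_l}-\sum_{l=i}^{m}\tfrac{b_l}{\prod_{k=i}^{l}a_k}$; inserting the two possible values of $(a_l,b_l)$ factors each product into a $\varphi$-part over $l\notin B^\prime$ and a $2$-part over $l\in B^\prime$, and splits the sum into the $\omega(\varphi_l-1)$ group and the $P_{tol}$ group, reproducing the three terms of the statement (with the $P_{tol}$ terms indexed by $l\notin C^\prime$, the SIC-active indices). Taking $i=2$ gives $P_1=Q_2$, the highest-gain user's formula; for general $i$, adding the user-$i$ term $P_{tol}/\gamma_{i-1}$ recovers case (ii), and multiplying $(Q_i+\omega/\gamma_i)$ by $(\varphi_i-1)$ recovers case (i). I would close by checking that the multipliers delivered by the stationarity conditions (\ref{opd2})--(\ref{opd3}) are nonnegative, certifying that the chosen $S$ is genuinely optimal. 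The main obstacle is this last combinatorial bookkeeping — proving that exactly one of $\{\mu_i,\psi_i\}$ is active for each $i$ and that only $2^{m-1}$ of the $\binom{2m}{m}$ multiplier patterns are consistent with (\ref{opd2})--(\ref{opd3}); once the active set is fixed, the recursion makes the remaining algebra routine.
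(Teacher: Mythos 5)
Your proposal is correct, and it takes a genuinely different route from the paper. The paper's Appendix~A proves the lemma by instantiation: it fixes $m=4$, picks one admissible multiplier set $S_1=\{\lambda,\mu_2,\mu_3,\mu_4\}$, writes down the binding constraints (A.1)--(A.2), solves that specific linear system for $P_1,\dots,P_4$, then explicitly computes $\lambda,\mu_2,\mu_3,\mu_4$ from stationarity and shows they are positive (using $\gamma_1>\gamma_2>\gamma_3>\gamma_4$), declaring that ``all the other cases can easily be verified by using a similar approach.'' You instead derive the general closed form for arbitrary $m$ and an arbitrary admissible active set in one pass, by observing that every tight constraint expresses $P_i$ affinely in the cumulative power $Q_i=\sum_{j<i}P_j$, so that complementary slackness collapses to the scalar recursion $Q_{i+1}=a_iQ_i+b_i$ with $(a_i,b_i)=(\varphi_i,\tfrac{\omega}{\gamma_i}(\varphi_i-1))$ or $(2,P_{tol}/\gamma_{i-1})$, anchored at $Q_{m+1}=P_t$; unwinding it reproduces exactly the three-term expression in the lemma, including the $\varphi$-versus-$2$ product structure indexed by $B^\prime$ and the $P_{tol}$ sum indexed by $C^\prime$. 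Your argument that the budget constraint must bind (monotonicity of the objective and all constraints in $P_m$) is also a genuine addition; the paper simply assumes $\lambda\in S$. What your approach buys is a uniform proof of the formula the lemma actually asserts — for every $m$ and every one of the $2^{m-1}$ active sets — rather than verification of a single instance; what the paper's approach buys is the explicit dual-feasibility check (positivity of the recovered multipliers), which you correctly identify as necessary but defer, along with the combinatorial claim that exactly one of $\{\mu_i,\psi_i\}$ binds per user. That last claim is asserted by equation-counting in both your write-up and the paper, and your assertion that $\mu_1^\ast=0$ always holds is, like the paper's, really a condition to be verified a posteriori (it appears as a ``necessary condition'' in Table~I) rather than a theorem; so these are shared, not new, gaps.
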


\begin{proof}
See \textbf{Appendix A}.
\end{proof}

The optimal transmission powers and the corresponding necessary conditions for $2$-, $3$-, and $4$-user downlink  NOMA clusters are provided in {\bf Table~\ref{dlcs}}.
\begin{table*}[]
\centering
\caption{optimal transmission power and corresponding necessary conditions for $2$-, $3$-, and $4$-user downlink NOMA cluster}
\label{dlcs}
\begin{tabular}{|c|l|l|}
\hline
\begin{tabular}[c]{@{}c@{}}NOMA\\ Cluster\end{tabular} & \multicolumn{1}{c|}{Optimal transmit power}                                                                                                                                                                                                                                                                                                                                                                                                                                                                                                                                                                                                                                                                                                                                                                      & \multicolumn{1}{c|}{Necessary conditions}                                                                                                                                                                                                                                          \\ \hline
\multirow{2}{*}{$2$-user}                                        & \begin{tabular}[c]{@{}l@{}}$P_1= \frac{P_t}{\varphi_2}-\frac{\omega(\varphi_2 - 1)}{\varphi_2\gamma_2}$, \\ $P_2= \frac{P_t(\varphi_2-1)}{\varphi_2}+\frac{\omega(\varphi_2 - 1)}{\varphi_2\gamma_2}$\end{tabular}                                                                                                                                                                                                                                                                                                                                                                                                                                                                                                                                                                                                   & \begin{tabular}[c]{@{}l@{}}$P_i\gamma_i-\Big(\varphi_1-1\Big) \Big(\sum\limits_{j=1}^{i-1}P_j\gamma_i + \omega\Big)> 0,\,\forall\, i=1$,\\ $\Big(P_i-\sum\limits_{j=1}^{i-1}P_j\Big)\gamma_{i-1} - P_{tol}> 0,\,\forall\, i=2$\end{tabular}     \\ \cline{2-3} 
                                                                 & \begin{tabular}[c]{@{}l@{}}$P_1= \frac{P_t}{2}-\frac{P_{tol}}{2\gamma_1}$, \\ $P_2= \frac{P_t}{2}+\frac{P_{tol}}{2\gamma_1}$\end{tabular}                                                                                                                                                                                                                                                                                                                                                                                                                                                                                                                                                                                                                                                                    & $P_i\gamma_i-\Big(\varphi_1-1\Big) \Big(\sum\limits_{j=1}^{i-1}P_j\gamma_i + \omega\Big)> 0,\,\forall\, i=1,2$                                                                                                                                                      \\ \hline
\multirow{4}{*}{$3$-user}                                        & \begin{tabular}[c]{@{}l@{}}$P_1= \frac{P_t}{\varphi_2\varphi_3}-\frac{\omega(\varphi_2 - 1)}{\varphi_2\gamma_2}-\frac{\omega(\varphi_3 - 1)}{\varphi_2\varphi_3\gamma_3}$, \\ $P_2= \frac{P_t(\varphi_2 -1)}{\varphi_2\varphi_3}+\frac{\omega(\varphi_2 - 1)}{\varphi_2\gamma_2} - \frac{\omega(\varphi_2 - 1)(\varphi_3 - 1)}{\varphi_2\varphi_3\gamma_3}$, \\ $P_3= \frac{P_t(\varphi_3 -1)}{\varphi_3}+\frac{\omega(\varphi_3 - 1)}{\varphi_3\gamma_3}$\end{tabular}                                                                                                                                                                                                                                                                                                                           & \begin{tabular}[c]{@{}l@{}}$P_i\gamma_i-\Big(\varphi_1-1\Big) \Big(\sum\limits_{j=1}^{i-1}P_j\gamma_i + \omega\Big)> 0,\,\forall\, i=1$,\\ $\Big(P_i-\sum\limits_{j=1}^{i-1}P_j\Big)\gamma_{i-1} - P_{tol}> 0,\,\forall\, i=2,3$\end{tabular}   \\ \cline{2-3} 
                                                                 & \begin{tabular}[c]{@{}l@{}}$P_1= \frac{P_t}{2\varphi_2}-\frac{\omega(\varphi_2 - 1)}{\varphi_2\gamma_2}-\frac{P_{tol}}{2\varphi_2\gamma_2}$, \\ $P_2= \frac{P_t(\varphi_2 -1)}{2\varphi_2}+\frac{\omega(\varphi_2 - 1)}{\varphi_2\gamma_2} -\frac{P_{tol}(\varphi_2 - 1)}{2\varphi_2\gamma_2}$, \\ $P_3= \frac{P_t}{2}+\frac{P_{tol}}{2\gamma_2}$\end{tabular}                                                                                                                                                                                                                                                                                                                                                                                                                 & \begin{tabular}[c]{@{}l@{}}$P_i\gamma_i-\Big(\varphi_1-1\Big) \Big(\sum\limits_{j=1}^{i-1}P_j\gamma_i + \omega\Big)> 0,\,\forall\, i=1,3$,\\ $\Big(P_i-\sum\limits_{j=1}^{i-1}P_j\Big)\gamma_{i-1} - P_{tol}> 0,\,\forall\, i=2$\end{tabular}   \\ \cline{2-3} 
                                                                 & \begin{tabular}[c]{@{}l@{}}$P_1= \frac{P_t}{2\varphi_3}-\frac{P_{tol}}{2\gamma_1}-\frac{\omega(\varphi_3 - 1)}{2\varphi_3\gamma_3}$, \\ $P_2= \frac{P_t}{2\varphi_3}+\frac{P_{tol}}{2\gamma_1}-\frac{\omega(\varphi_3 - 1)}{2\varphi_3\gamma_3}$, \\ $P_3= \frac{P_t(\varphi_3 - 1)}{\varphi_3}+\frac{\omega(\varphi_3 - 1)}{\varphi_3 \gamma_3}$\end{tabular}                                                                                                                                                                                                                                                                                                                                                                                                                                                          & \begin{tabular}[c]{@{}l@{}}$P_i\gamma_i-\Big(\varphi_1-1\Big) \Big(\sum\limits_{j=1}^{i-1}P_j\gamma_i + \omega\Big)> 0,\,\forall\, i=1,2$,\\ $\Big(P_i-\sum\limits_{j=1}^{i-1}P_j\Big)\gamma_{i-1} - P_{tol}> 0,\,\forall\, i=3$\end{tabular}   \\ \cline{2-3} 
                                                                 & \begin{tabular}[c]{@{}l@{}}$P_1= \frac{P_t}{4}-\frac{P_{tol}}{2\gamma_1}-\frac{P_{tol}}{4\gamma_2}$, \\ $P_2= \frac{P_t}{4}+\frac{P_{tol}}{2\gamma_1}-\frac{P_{tol}}{4\gamma_2}$, \\ $P_3= \frac{P_t}{2}+\frac{P_{tol}}{2\gamma_2}$\end{tabular}                                                                                                                                                                                                                                                                                                                                                                                                                                                                                                                                                      & $P_i\gamma_i-\Big(\varphi_1-1\Big) \Big(\sum\limits_{j=1}^{i-1}P_j\gamma_i + \omega\Big)> 0,\,\forall\, i=1,2,3$                                                                                                                                                    \\ \hline
\multirow{8}{*}{$4$-user}                                        & \begin{tabular}[c]{@{}l@{}}$P_1 = \frac{P_t}{\varphi_2\varphi_3\varphi_4} - \frac{\omega(\varphi_2-1)}{\varphi_2\gamma_2} - \frac{\omega(\varphi_3-1)}{\varphi_2\varphi_3\gamma_3} - \frac{\omega(\varphi_4-1)}{\varphi_2\varphi_3\varphi_4\gamma_4}$,\\ $P_2 = \frac{P_t(\varphi_2-1)}{\varphi_2\varphi_3\varphi_4} + \frac{\omega(\varphi_2-1)}{\varphi_2\gamma_2} - \frac{\omega(\varphi_2-1)(\varphi_3-1)}{\varphi_2\varphi_3\gamma_3} - \frac{\omega(\varphi_2-1)(\varphi_4-1)}{\varphi_2\varphi_3\varphi_4\gamma_4}$, \\ $P_3 = \frac{P_t(\varphi_3-1)}{\varphi_3\varphi_4} + \frac{\omega(\varphi_3-1)}{\varphi_3\gamma_3} - \frac{\omega(\varphi_3-1)(\varphi_4-1)}{\varphi_3\varphi_4\gamma_4}$, \\ $P_4 = \frac{P_t(\varphi_4-1)}{\varphi_4} + \frac{\omega(\varphi_4-1)}{\varphi_4\gamma_4}$\end{tabular} & \begin{tabular}[c]{@{}l@{}}$P_i\gamma_i-\Big(\varphi_1-1\Big) \Big(\sum\limits_{j=1}^{i-1}P_j\gamma_i + \omega\Big)> 0,\,\forall\, i=1$,\\ $\Big(P_i-\sum\limits_{j=1}^{i-1}P_j\Big)\gamma_{i-1} - P_{tol}> 0,\,\forall\, i=2,3,4$\end{tabular} \\ \cline{2-3} 
                                                                 & \begin{tabular}[c]{@{}l@{}}$P_1 = \frac{P_t}{2\varphi_2\varphi_3} - \frac{\omega(\varphi_2-1)}{\varphi_2\gamma_2} - \frac{\omega(\varphi_3-1)}{\varphi_2\varphi_3\gamma_3} - \frac{p_{tol}}{2\varphi_2\varphi_3\gamma_3}$,\\ $P_2 = \frac{P_t(\varphi_2-1)}{2\varphi_2\varphi_3} + \frac{\omega(\varphi_2-1)}{\varphi_2\gamma_2}- \frac{\omega(\varphi_2-1)(\varphi_3-1)}{\varphi_2\varphi_3\gamma_3} - \frac{P_{tol}(\varphi_2-1)}{2\varphi_2\varphi_3\gamma_3}$, \\ $P_3 = \frac{P_t(\varphi_3-1)}{2\varphi_3} + \frac{\omega(\varphi_3-1)}{\varphi_3\gamma_3} - \frac{P_{tol}(\varphi_3-1)}{2\varphi_3\gamma_3}$,\\ $P_4 = \frac{P_t}{2} + \frac{P_{tol}}{2\gamma_3}$\end{tabular}                                                                                                           & \begin{tabular}[c]{@{}l@{}}$P_i\gamma_i-\Big(\varphi_1-1\Big) \Big(\sum\limits_{j=1}^{i-1}P_j\gamma_i + \omega\Big)> 0,\,\forall\, i=1,4$,\\ $\Big(P_i-\sum\limits_{j=1}^{i-1}P_j\Big)\gamma_{i-1} - P_{tol}> 0,\,\forall\, i=2,3$\end{tabular} \\ \cline{2-3} 
                                                                 & \begin{tabular}[c]{@{}l@{}}$P_1 = \frac{P_t}{2\varphi_2\varphi_4} - \frac{\omega(\varphi_2-1)}{\varphi_2\gamma_2} - \frac{(\varphi_4-1)}{2\varphi_2\varphi_4\gamma_4} - \frac{P_{tol}}{2\varphi_2\gamma_2}$, \\ $P_2 = \frac{P_t(\varphi_2-1)}{2\varphi_2\varphi_4} + \frac{\omega(\varphi_2-1)}{\varphi_2\gamma_2}- \frac{\omega(\varphi_2-1)(\varphi_4-1)}{2\varphi_2\varphi_4\gamma_4} - \frac{P_{tol}(\varphi_2 - 1)}{2\varphi_2\gamma_2}$, \\ $P_3 = \frac{P_t}{2\varphi_4} +\frac{P_{tol}}{2\gamma_2} - \frac{\omega(\varphi_4-1)}{2\varphi_4\gamma_4}$,\\ $P_4 = \frac{P_t(\varphi_4 - 1)}{\varphi_4} + \frac{\omega(\varphi_4-1)}{\varphi_4\gamma_4}$\end{tabular}                                                                                                                        & \begin{tabular}[c]{@{}l@{}}$P_i\gamma_i-\Big(\varphi_1-1\Big) \Big(\sum\limits_{j=1}^{i-1}P_j\gamma_i + \omega\Big)> 0,\,\forall\, i=1,3$,\\ $\Big(P_i-\sum\limits_{j=1}^{i-1}P_j\Big)\gamma_{i-1} - P_{tol}> 0,\,\forall\, i=2,4$\end{tabular} \\ \cline{2-3} 
                                                                 & \begin{tabular}[c]{@{}l@{}}$P_1 = \frac{P_t}{2\varphi_3\varphi_4} - \frac{P_{tol}}{2\gamma_1} - \frac{\omega(\varphi_3-1)}{2\varphi_3\gamma_3} - \frac{\omega(\varphi_4-1)}{2\varphi_3\varphi_4\gamma_4}$, \\ $P_2 = \frac{P_t}{2\varphi_3\varphi_4} + \frac{P_{tol}}{2\gamma_1} - \frac{\omega(\varphi_3-1)}{2\varphi_3\gamma_3} - \frac{\omega(\varphi_4-1)}{2\varphi_3\varphi_4\gamma_4}$, \\ $P_3 = \frac{P_t(\varphi_3-1)}{\varphi_3\varphi_4} +\frac{\omega(\varphi_3-1)}{\varphi_3\gamma_3} - \frac{\omega(\varphi_3-1)(\varphi_4-1)}{\varphi_3\varphi_4\gamma_4}$,\\ $P_4 = \frac{P_t(\varphi_4 -1)}{\varphi_4} + \frac{\omega(\varphi_4-1)}{\varphi_4\gamma_4}$\end{tabular}                                                                                                                                                             & \begin{tabular}[c]{@{}l@{}}$P_i\gamma_i-\Big(\varphi_1-1\Big) \Big(\sum\limits_{j=1}^{i-1}P_j\gamma_i + \omega\Big)> 0,\,\forall\, i=1,2$,\\ $\Big(P_i-\sum\limits_{j=1}^{i-1}P_j\Big)\gamma_{i-1} - P_{tol}> 0,\,\forall\, i=3,4$\end{tabular} \\ \cline{2-3} 
                                                                 & \begin{tabular}[c]{@{}l@{}}$P_1 = \frac{P_t}{4\varphi_2} - \frac{\omega(\varphi_2-1)}{\varphi_2\gamma_2} - \frac{P_{tol}}{2\varphi_2\gamma_2} - \frac{P_{tol}}{4\varphi_2\gamma_3}$, \\ $P_2 = \frac{P_t(\varphi_2-1)}{4\varphi_2} + \frac{\omega(\varphi_2-1)}{\varphi_2\gamma_2} - \frac{P_{tol}(\varphi_2-1)}{2\varphi_2\gamma_2} - \frac{P_{tol}(\varphi_2-1)}{4\varphi_2\gamma_3}$, \\ $P_3 = \frac{P_t}{4} +\frac{P_{tol}}{2\gamma_2} - \frac{P_{tol}}{4\gamma_3}$, \\ $P_4 = \frac{P_t}{2} + \frac{P_{tol}}{2\gamma_3}$\end{tabular}                                                                                                                                                                                                                                         & \begin{tabular}[c]{@{}l@{}}$P_i\gamma_i-\Big(\varphi_1-1\Big) \Big(\sum\limits_{j=1}^{i-1}P_j\gamma_i + \omega\Big)> 0,\,\forall\, i=1,3,4$,\\ $\Big(P_i-\sum\limits_{j=1}^{i-1}P_j\Big)\gamma_{i-1} - P_{tol}> 0,\,\forall\, i=2$\end{tabular} \\ \cline{2-3} 
                                                                 & \begin{tabular}[c]{@{}l@{}}$P_1 = \frac{P_t}{4\varphi_3} - \frac{P_{tol}}{2\gamma_1} - \frac{\omega(\varphi_3-1)}{2\varphi_3\gamma_3} - \frac{P_{tol}}{4\varphi_3\gamma_3}$, \\ $P_2 = \frac{P_t}{4\varphi_3} + \frac{P_{tol}}{2\gamma_1} - \frac{\omega(\varphi_3-1)}{2\varphi_3\gamma_3} - \frac{P_{tol}}{4\varphi_3\gamma_3}$, \\ $P_3 = \frac{P_t(\varphi_3-1)}{2\varphi_3} +\frac{\omega(\varphi_3-1)}{\varphi_3\gamma_3} - \frac{P_{tol}(\varphi_3-1)}{2\varphi_3\gamma_3}$, \\ $P_4 = \frac{P_t}{2} + \frac{P_{tol}}{2\gamma_3}$\end{tabular}                                                                                                                                                                                                                                                                          & \begin{tabular}[c]{@{}l@{}}$P_i\gamma_i-\Big(\varphi_1-1\Big) \Big(\sum\limits_{j=1}^{i-1}P_j\gamma_i + \omega\Big)> 0,\,\forall\, i=1,2,4$,\\ $\Big(P_i-\sum\limits_{j=1}^{i-1}P_j\Big)\gamma_{i-1} - P_{tol}> 0,\,\forall\, i=3$\end{tabular} \\ \cline{2-3} 
                                                                 & \begin{tabular}[c]{@{}l@{}}$P_1 = \frac{P_t}{4\varphi_4} - \frac{P_{tol}}{2\gamma_1} - \frac{P_{tol}}{4\gamma_2} - \frac{\omega(\varphi_4-1)}{4\varphi_4\gamma_4}$, \\ $P_2 = \frac{P_t}{4\varphi_4} + \frac{P_{tol}}{2\gamma_1} - \frac{P_{tol}}{4\gamma_2} - \frac{\omega(\varphi_4-1)}{4\varphi_4\gamma_4} $, \\ $P_3 = \frac{P_t}{2\varphi_4} + \frac{P_{tol}}{2\gamma_2} - \frac{\omega(\varphi_4-1)}{2\varphi_4\gamma_4}$, \\ $P_4 = \frac{P_t(\varphi_4 - 1)}{\varphi_4} + \frac{\omega(\varphi_4-1)}{\varphi_4\gamma_4}$\end{tabular}                                                                                                                                                                                                                                                                                             & \begin{tabular}[c]{@{}l@{}}$P_i\gamma_i-\Big(\varphi_1-1\Big) \Big(\sum\limits_{j=1}^{i-1}P_j\gamma_i + \omega\Big)> 0,\,\forall\, i=1,2,3$,\\ $\Big(P_i-\sum\limits_{j=1}^{i-1}P_j\Big)\gamma_{i-1} - P_{tol}> 0,\,\forall\, i=4$\end{tabular} \\ \cline{2-3} 
                                                                 & \begin{tabular}[c]{@{}l@{}}$P_1 = \frac{P_t}{8} - \frac{P_{tol}}{2\gamma_1} - \frac{P_{tol}}{4\gamma_2} - \frac{P_{tol}}{8\gamma_3}$,\\ $P_2 = \frac{P_t}{8} + \frac{P_{tol}}{2\gamma_1} - \frac{P_{tol}}{4\gamma_2} - \frac{P_{tol}}{8\gamma_3}$,\\ $P_3 = \frac{P_t}{4} + \frac{P_{tol}}{2\gamma_2} - \frac{P_{tol}}{4\gamma_3}$, \\ $P_4 = \frac{P_t}{2} + \frac{P_{tol}}{2\gamma_3}$\end{tabular}                                                                                                                                                                                                                                                                                                                                                                                                         & $P_i\gamma_i-\Big(\varphi_1-1\Big) \Big(\sum\limits_{j=1}^{i-1}P_j\gamma_i + \omega\Big)> 0,\,\forall\, i=1,2,3,4$                                                                                                                                                  \\ \hline
\end{tabular}
\end{table*}

\subsection{Uplink NOMA}
\subsubsection{\textbf{Problem Formulation}}
Let us consider an $m$-user uplink NOMA cluster, where the normalized channel gains of $UE_1, UE_2,\cdots, UE_m$ are assumed as $\gamma_1, \gamma_2,\cdots,\gamma_m$, respectively, and their respective minimum rate requirements are $R_1^\prime, R_2^\prime,\cdots,R_m^\prime$, where $R_i^\prime>0$. Let us also consider that the $\omega$ units of  resource blocks are allocated to this $m$-user uplink NOMA cluster, where the bandwidth of each resource block  is $B$ Hz. The problem for optimal power control can then be expressed as follows:
\begin{align*} 
&\enspace \underset{P}{\text{max}}\quad \omega B\mathlarger{\mathlarger{‎‎\sum}}_{i = 1}^{m}\log_2\Bigg(1+\frac{P_i \gamma_i}{\sum\limits_{j = i+1}^{m} P_j\gamma_j + \omega}\Bigg) \nonumber\\
&\text{subject to:} \quad \bm{\mathrm{C_1^\prime:}}\,\, P_i \leq P_t^\prime,\,\forall \,  i=1,2,\cdots,m, \nonumber\\
& \quad\bm{\mathrm{C_2^\prime:}}\,\, \omega B\log_2\Bigg(1+\frac{P_i \gamma_i}{\sum\limits_{j = i+1}^{m} P_j\gamma_j + \omega}\Bigg) \Scale[.96]{\geq R_i^\prime,\,\forall i=1,2,\cdots,m,} \nonumber\\
&\quad \bm{\mathrm{C_3^\prime:}}\,\, P_{i} \gamma_{i} - \sum\limits_{j=i+1}^{m-1}P_j\gamma_j \geq P_{tol} ,\,\forall\, i=1,2,\cdots,m-1,
\end{align*}
where $\sum_{j = i+1}^{m} P_j\gamma_j$ is the inter-user interference for $i$-th user in uplink NOMA cluster, and $P_t^\prime$ is the uplink maximum transmission power budget for each user. Note that the aforementioned optimization problem is also convex under the \textbf{Constraints $\mathrm{\bf{C_1^\prime}}-\mathrm{\bf{C_3^\prime}}$}. 

\subsubsection{\textbf{Closed-Form Optimal Power Solution}}
The Lagrange function for the above problem  can then be expressed as 
\begin{align}
\mathcal{L}&(P,\lambda,\mu,\psi)=\omega B\mathlarger{\mathlarger{\sum}}_{i = 1}^{m}\log_2\Bigg(1+\frac{P_i \gamma_i}{\sum\limits_{j = i+1}^{m} P_j\gamma_j + \omega}\Bigg)+ \nonumber\\
&\sum\limits_{i=1}^{m}\lambda_i\Big(P_t^\prime-P_i\Big)+\sum\limits_{i=1}^{m}\mu_i\Big(P_i\gamma_i-\sum\limits_{j=i+1}^{m}\phi_i P_j\gamma_j - \phi_i\omega\Big)+ \nonumber\\
&\sum\limits_{i=1}^{m-1}\psi_i\Big(P_i\gamma_i-\sum\limits_{j=i+1}^{m}P_j\gamma_j-P_{tol}\Big),
\label{opu1}
\end{align}
where $\phi_i = \Big(2^{\frac{R_i^\prime}{\omega B}}-1\Big)$, and $\lambda_i$, $\mu_i$, and $\psi_i$, are the Lagrange multipliers. Taking derivatives of equation (\ref{opu1}) w.r.t. $P_i$, $\lambda_i$, $\mu_i$, and $\psi_i$, we obtain
\begin{align}
\frac{\partial \mathcal{L}}{\partial P_i}=&\frac{\omega B\gamma_i}{\sum\limits_{j = 1}^{m}P_j\gamma_j+\omega} -\lambda_i + \mu_i\gamma_i - \sum\limits_{k=1}^{i-1}\phi_k\mu_k\gamma_i + \gamma_i\psi_i -\nonumber\\
&\sum\limits_{l=1}^{i-1}\psi_l\gamma_l\leq 0, \, \textrm{if} \, P_i^\ast \geq 0,\, \forall \, i = 1,2,\cdots,m-1,
\label{opu2}
\end{align}
\begin{align}
\frac{\partial \mathcal{L}}{\partial P_m}=&\frac{\omega B\gamma_m}{\sum\limits_{j = 1}^{m}P_j\gamma_j+\omega} -\lambda_m + \mu_m\gamma_m - \sum\limits_{k=1}^{m-1}\phi_k\mu_k\gamma_m - \,\, \, \nonumber\\
&\sum\limits_{l=1}^{m}\psi_l\gamma_l\leq 0, \,
\textrm{if} \, P_m^\ast \geq 0,
\label{opu3}
\end{align}
\begin{align}
\frac{\partial \mathcal{L}}{\partial \lambda_i^\ast}=P_t^\prime-P_i \geq 0, \, \textrm{if} \, \lambda_i^\ast \geq 0,\qquad \, \forall \, i = 1,2,\cdots,m, \label{opu4}\\
\frac{\partial \mathcal{L}}{\partial \mu_i^\ast}=P_i\gamma_i-\sum\limits_{j = i+1}^{m}\phi_i P_j\gamma_j- \phi_i \omega\geq 0, \, \textrm{if}\, \mu_i^\ast \geq 0,\enspace \,\, \nonumber \label{opu5}\\
\forall \, i = 1,2,\cdots,m, \\
\frac{\partial \mathcal{L}}{\partial \psi_i^\ast}=P_i \gamma_i - \sum\limits_{j = i+1}^{m}P_j\gamma_j - P_{tol}\geq 0, \, \textrm{if}\, \psi_i^\ast \geq 0, \quad \enspace  \nonumber\\
\forall \, i = 1,2,\cdots,m-1.
\label{opu6}
\end{align}

\renewcommand{\labelitemi}{$\blacksquare$}
In an $m$-user  cluster, there are $(3m - 1)$ Lagrange multipliers, thus there are $2^{3m-1}$ combinations of Lagrange multipliers. Each combination needs to be checked whether it satisfies the KKT conditions or not~\cite{tianxi}.  However, checking $2^{3m-1}$ combinations is computationally complex. 
For example, if $m=3,4,\cdots,10$, then the number of combinations are $256, 2048,\cdots, 536870912$, respectively. However, in our problem $P_i > 0,\,\forall\,i = 1,2,3,...,m$; therefore, we do not need to check all the combinations of Lagrange multipliers.
To obtain a fixed number of solutions for $m$ decision variables, we need exactly $m$ equations \cite{chong2008}. Thus,  only ${3m-1 \choose m}$ combinations need to be checked. After solving for $2$-, $3$-, $4$-, and $6$-user NOMA clusters, we find that the Lagrange multiplier combinations satisfying KKT conditions are $3$ for all cases.

Note that $(3m-1)$ Lagrange multipliers belong to three sets of constraints. These sets are total transmit power constraints, minimum  data rate constraints, and SIC constraints, given mathematically as,
$A = \{\lambda_1, \lambda_2, \lambda_3,\cdots,\lambda_m\}, \quad B = \{\mu_1, \mu_2, \mu_3,,...,\mu_m\}$, and $\quad C = \{\psi_1, \psi_2, \psi_3,...,\psi_{m-1}\}$, respectively. Therefore, the solution set is, $S = \{\lambda_1, \lambda_2\,\lambda_3, ...,\lambda_{m-1}, \lambda_m \,\text{or}\,\mu_{m-1}\,\text{or}\,\psi_{m-1}\}$. For example, for a $3$-user uplink NOMA cluster, the satisfied KKT conditions are $S_1 = \{\lambda_1,\lambda_2, \lambda_3\}$, $S_2 = \{\lambda, \lambda_2, \mu_2\}$, and $S_3 = \{\lambda,\lambda_2, \psi_2\}$. Now let us define three additional sets of Lagrange multipliers as, $A^\prime = S-A$, $B^\prime = S-B$, and $C^\prime = S-C$.  Then the closed-form solution of optimal power allocation to $m$-user uplink NOMA cluster can be given as in the following lemma.

\begin{lemma}[Optimal Power Allocations  for $m$-User Uplink NOMA Cluster]
The closed-form solutions of the optimal power allocations in an $m$-user uplink NOMA cluster can be given as follows: 
\begin{align*}
\text{(i) If ($A^\prime == \{\varnothing\}$), ($B^\prime ==  B$), and ($C^\prime == C$)}, \qquad \qquad \qquad \\
P_i = P_t^\prime, \quad \forall \, i,  \qquad \qquad
\end{align*}
\begin{align*}
\text{(ii) If ($A^\prime \neq \{\varnothing\}$), ($B^\prime \neq B$), and ($C^\prime == C$)}, \qquad \qquad \qquad \enspace \,\\
P_i = P_t^\prime, \quad \forall \, i = 1,2,\cdots,m-1,  \qquad \\
P_m = \frac{P_t^\prime \gamma_{m-1}}{\phi_{m-1}\gamma_m} - \frac{\omega}{\gamma_m},  \qquad \qquad \qquad \, \,
\end{align*}
\begin{align*}
\text{(iii) If ($A^\prime \neq \{\varnothing\}$), ($B^\prime == B$), and ($C^\prime \neq C$)}, \qquad \qquad \qquad \\
P_i = P_t^\prime, \quad \forall \, i = 1,2,3,\cdots,m-1,  \qquad \\
P_m = \frac{P_t^\prime \gamma_{m-1}}{\gamma_m} - \frac{P_{tol}}{\gamma_m}.  \qquad \qquad \qquad \, \,
\end{align*}
\end{lemma}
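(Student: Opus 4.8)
The plan is to exploit the telescoping structure of the uplink sum-rate and then read off the optimum from the KKT system \eqref{opu2}--\eqref{opu6}. First I would simplify the objective. Writing $S_i=\sum_{j=i}^{m}P_j\gamma_j+\omega$, each summand satisfies $1+\frac{P_i\gamma_i}{\sum_{j=i+1}^{m}P_j\gamma_j+\omega}=\frac{S_i}{S_{i+1}}$, so the sum telescopes to
\begin{align*}
\omega B\sum_{i=1}^{m}\log_2\frac{S_i}{S_{i+1}}=\omega B\log_2\frac{S_1}{S_{m+1}}=\omega B\log_2\Big(1+\tfrac{1}{\omega}\sum_{i=1}^{m}P_i\gamma_i\Big).
\end{align*}
Thus the sum-rate is a strictly increasing function of the total received power $T=\sum_{i=1}^{m}P_i\gamma_i$, and maximizing it is equivalent to maximizing the linear functional $T$ over the polytope cut out by $\mathrm{C_1^\prime}$--$\mathrm{C_3^\prime}$. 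This is precisely why the common gradient term $\frac{\omega B\gamma_i}{\sum_j P_j\gamma_j+\omega}$ appears in every stationarity condition \eqref{opu2}--\eqref{opu3}.

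Next I would argue that, since $T$ increases in every $P_i$ with positive weight $\gamma_i$, the optimum sits on the upper boundary of the polytope. Because a stronger user's power relaxes only its own rate and SIC constraints while tightening those of the users decoded before it, and because the weakest user $m$ is the one interfering with everybody, the first $m-1$ users are driven to their individual budgets, activating $\lambda_1,\dots,\lambda_{m-1}$ and hence giving $P_i=P_t^\prime$ for $i=1,\dots,m-1$. The power of user $m$ is then capped by the smallest of three upper bounds: its own budget (so $\lambda_m$ active), user $(m-1)$'s rate requirement (so $\mu_{m-1}$ active), or user $(m-1)$'s SIC threshold (so $\psi_{m-1}$ active). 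These are exactly the three KKT-consistent active sets $S_1,S_2,S_3$ identified immediately before the lemma.

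I would finish by solving each active set. Setting $P_{m-1}=P_t^\prime$ and imposing the relevant constraint with equality yields, in the rate case $P_{m-1}\gamma_{m-1}=\phi_{m-1}(P_m\gamma_m+\omega)$, so that $P_m=\frac{P_t^\prime\gamma_{m-1}}{\phi_{m-1}\gamma_m}-\frac{\omega}{\gamma_m}$ (case (ii)), and in the SIC case $P_{m-1}\gamma_{m-1}-P_m\gamma_m=P_{tol}$, so that $P_m=\frac{P_t^\prime\gamma_{m-1}}{\gamma_m}-\frac{P_{tol}}{\gamma_m}$ (case (iii)); when the budget is tightest one simply keeps $P_m=P_t^\prime$ (case (i)). Each candidate is then checked against the remaining primal constraints and against dual feasibility, i.e.\ nonnegativity of the surviving multipliers in \eqref{opu2}--\eqref{opu6}.

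The hard part will be the second step: rigorously establishing that only these three active-set patterns can satisfy the KKT system for general $m$, i.e.\ that the first $m-1$ budget constraints are always binding and that the single extra active constraint always belongs to user $m-1$ rather than to some earlier user. The difficulty is that raising a weak user's power tightens the rate and SIC constraints of \emph{all} stronger users, so whether an earlier user's constraint becomes binding depends on the channel-gain gaps $\gamma_k$ versus $\gamma_{k+1}$. One would need to combine the ordering $\gamma_1>\cdots>\gamma_m$ with the telescoping monotonicity to show that the earlier users' constraints stay slack whenever user $(m-1)$'s binds; the paper instead establishes this by exhaustively checking the $\binom{3m-1}{m}$ multiplier combinations for $m=2,3,4,6$ and observing that exactly these three survive.
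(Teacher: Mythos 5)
Your proposal is correct, and it takes a genuinely different route from the paper. The paper's Appendix~B works directly with the KKT system of the \emph{nonlinear} program: it fixes $m=4$, picks one candidate active set $S_1=\{\lambda_1,\lambda_2,\lambda_3,\mu_3\}$, solves the complementary-slackness equalities \eqref{equ:b1}--\eqref{equ:b2} for the powers, then back-substitutes into the stationarity conditions \eqref{opu2}--\eqref{opu3} to obtain explicit multipliers \eqref{equ:b6}--\eqref{equ:b8} and checks their positivity, asserting that the other cases and cluster sizes follow ``by a similar approach.'' Your telescoping identity $\prod_i S_i/S_{i+1}=1+\tfrac{1}{\omega}\sum_i P_i\gamma_i$ replaces all of that machinery with a linear program: the objective becomes a strictly increasing function of the single linear functional $T=\sum_i P_i\gamma_i$, so case (i) is immediate (corner of the box $[0,P_t^\prime]^m$), and in cases (ii)/(iii) optimality follows from a one-line upper bound --- for any feasible point, $P_m\gamma_m\le P_{m-1}\gamma_{m-1}/\phi_{m-1}-\omega\le P_t^\prime\gamma_{m-1}/\phi_{m-1}-\omega$ (resp.\ $P_m\gamma_m \le P_t^\prime\gamma_{m-1}-P_{tol}$), and your candidate meets it with equality --- so no multiplier computation is needed at all, and the otherwise-unexplained fact that the strong users transmit at full power becomes structurally transparent. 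The ``hard part'' you flag (proving for general $m$ that the binding extra constraint must be user $(m-1)$'s rather than an earlier user's, i.e.\ that exactly these three active sets exhaust the possibilities) is a real gap, but it is equally a gap in the paper: the paper justifies its enumeration only by solving $m=2,3,4,6$ and counting surviving combinations, and its Table~\ref{ulcs} handles the issue by attaching ``necessary conditions'' under which each formula is valid. Read conditionally on the active set, as the lemma is stated, your derivation together with the primal/dual feasibility check you describe proves exactly as much as the paper does, with considerably less computation.
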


\begin{proof}
See \textbf{Appendix B}.
\end{proof}

The optimal transmission powers and the corresponding necessary conditions for $2$-, $3$-, and $4$-user uplink  NOMA clusters are provided in {\bf Table~\ref{ulcs}}.
\begin{table*}[ht]
\centering
\caption{Closed-form Solutions for $2$-user, $3$-user, and $4$-user uplink NOMA Cluster}
\label{ulcs}
\begin{tabular}{|c|c|c|}
\hline
\begin{tabular}[c]{@{}c@{}}NOMA\\ Cluster\end{tabular} & Optimal transmission power                                                                                                                                                   & Necessary conditions                                                                                                                                                                                                                                                                                                                 \\ \hline
\multirow{3}{*}{$2$-user}                                      & $P_i = P_t^\prime, \, \forall \, i = 1,2$                                                                                                                          & \begin{tabular}[c]{@{}c@{}}$\textbf{$(C_1^2)$}\quad P_i\gamma_i-\sum\limits_{j = i+1}^{m}\phi_i P_j\gamma_j- \phi_i \omega > 0, \, \forall \, i = 1,2$\\ $\textbf{$(C_2^2)$}\quad P_1\gamma_1 - P_2\gamma_2 - P_{tol} > 0$\end{tabular}                                                                \\ \cline{2-3} 
                                                               & $P_1 = P_t^\prime, \, P_2 = \frac{P_t^\prime\gamma_1}{\phi_1\gamma_2} - \frac{\omega}{\gamma_2}$                                                                        & $\textbf{$(C_1^2)$}\,\forall\, i=2$, $\textbf{$(C_2^2)$}$, and $P_2 < P_t^\prime$                                                                                                                                                                                                                   \\ \cline{2-3} 
                                                               & $P_1 = P_t^\prime, \, P_2 = \frac{P_t^\prime\gamma_1}{\gamma_2} - \frac{P_{tol}}{\gamma_2}$                                                                       & $\textbf{$(C_1^2)$}$, and $P_2 < P_t^\prime$                                                                                                                                                                                                                                                                         \\ \hline
\multirow{3}{*}{$3$-user}                                      & $P_i = P_t^\prime,\,\forall \, i = 1,2,3$                                                                                                                          & \begin{tabular}[c]{@{}c@{}}$\textbf{$(C_1^3)$}\quad P_i\gamma_i-\sum\limits_{j = i+1}^{m}\phi_i P_j\gamma_j- \phi_i \omega > 0, \, \forall \, i = 1,2,3$\\ $\textbf{$(C_2^3)$}\quad P_i \gamma_i - \sum\limits_{j = i+1}^{m}P_j\gamma_j - P_{tol} > 0, \,,\forall \, i = 1,2$\end{tabular}              \\ \cline{2-3} 
                                                               & \begin{tabular}[c]{@{}c@{}}$P_i = P_t^\prime, \,\forall \, i = 1,2$\\ $P_3 = \frac{P_t^\prime\gamma_2}{\phi_2\gamma_3} - \frac{\omega}{\gamma_3}$\end{tabular}          & $\textbf{$(C_1^3)$}\,\forall\, i=1,3$, $\textbf{$(C_2^3)$}$, and $P_3 < P_t^\prime$                                                                                                                                                                                                                 \\ \cline{2-3} 
                                                               & \begin{tabular}[c]{@{}c@{}}$P_i = P_t^\prime, \,\forall \, i = 1,2$\\ $P_3 = \frac{P_t^\prime\gamma_2}{\gamma_3} - \frac{P_{tol}}{\gamma_3}$\end{tabular}         & $\textbf{$(C_1^3)$}$, $\textbf{$(C_2^3)$}\, \forall\, i=1$, and $P_3< P_t^\prime$                                                                                                                                                                                                                   \\ \hline
\multirow{3}{*}{$4$-user}                                      & $P_i = P_t^\prime, \,\forall \, i = 1,2,3,4$                                                                                                                       & \begin{tabular}[c]{@{}c@{}}$\textbf{($C_1^4$)}\quad P_i\gamma_i-\sum\limits_{j = i+1}^{m}\phi_i P_j\gamma_j- \phi_i \omega > 0, \, \forall \, i = 1,2,3,4$\\ $\textbf{($C_2^4$)}\quad P_i \gamma_i - \sum\limits_{j = i+1}^{m}P_j\gamma_j - P_{tol} > 0, \,,\forall \, i = 1,2,3$\end{tabular}         \\ \cline{2-3} 
                                                               & \begin{tabular}[c]{@{}c@{}}$P_i = P_t^\prime, \,\forall \, i = 1,2,3$\\ $P_4 = \frac{P_t^\prime\gamma_3}{\phi_3\gamma_4} - \frac{\omega}{\gamma_4}$\end{tabular}        & $\textbf{($C_1^4$)}\,\forall\, i=1,2,4$, $\textbf{($C_2^4$)}$, and $P_4 < P_t^\prime$                                                                                                                                                                                                               \\ \cline{2-3} 
                                                               & \begin{tabular}[c]{@{}c@{}}$P_i = P_t^\prime, \,\forall \, i = 1,2,3$\\ $P_4 = \frac{P_t^\prime\gamma_3}{\gamma_4} - \frac{P_{tol}}{\gamma_4}$\end{tabular}       & $\textbf{($C_1^4$)}$, $\textbf{($C_2^4$)}\, \forall\, i=1,2$, and $P_4< P_t^\prime$                                                                                                                                                                                                                 \\ \hline
\end{tabular}
\end{table*}

{\bf Remark:} In an uplink NOMA cluster, power control needs to be applied only at the weakest channel gain user. For example, for $4$-user uplink NOMA cluster $UE_1$, $UE_2$, and $UE_3$ transmits with full power, while power control may be needed at $UE_4$ in order to
\begin{itemize}
\item maintain minimum data rate for second weakest channel user ($UE_3$ in $4$-user NOMA cluster), and
\item maintain minimum receive power difference between least two channel gain users ($UE_3$ and $UE_4$ in a $4$-user NOMA cluster) at BS receiver.
\end{itemize}

\begin{table*}[]
\centering
\caption{Performance of $m$-user Downlink NOMA $(m = 2,3,4)$ and OMA systems with 12 users}
\label{dl-compr}
\begin{tabular}{|c|c|c|c|c|c|c|c|c|c|c|c|c|c|c|c|c|}
\hline
\multirow{3}{*}{Case} & \multicolumn{12}{c|}{\multirow{2}{*}{Normalized channel gain (dB)}}                                                                                                     & \multicolumn{4}{c|}{Sum-throughput (Mbps)}                            \\ \cline{14-17} 
                      & \multicolumn{12}{c|}{}                                                                                                                                             & \multicolumn{3}{c|}{NOMA}                       & \multirow{2}{*}{OMA} \\ \cline{2-16}
                      & $\gamma_1$ & $\gamma_2$ & $\gamma_3$ & $\gamma_4$ & $\gamma_5$ & $\gamma_6$ & $\gamma_7$ & $\gamma_8$ & $\gamma_9$ & $\gamma_{10}$ & $\gamma_{11}$ & $\gamma_{12}$ & $4$-UEs        & $3$-UEs        & $2$-UEs       &                      \\ \hline \hline
1                     & 40         & 15         & 14.5       & 14         & 13.5       & 13         & 12.5       & 12         & 11.5       & 11            & 10.5          & 10            & \textbf{12.78} & 11.72          & 10.3          & 8.15                 \\ \hline
2                     & 40         & 39.5       & 15         & 14.5       & 14         & 13.5       & 13         & 12.5       & 12         & 11.5          & 11            & 10.5          & \textbf{18.36} & 16.13          & 13.39         & 9.85                 \\ \hline
3                     & 40         & 39.5       & 39         & 15         & 14.5       & 14         & 13.5       & 13         & 12.5       & 12            & 11.5          & 11            & \textbf{23.74} & 20.37          & 16.38         & 11.51                \\ \hline
4                     & 40         & 39.5       & 39         & 38.5       & 15         & 14.5       & 14         & 13.5       & 13         & 12.5          & 12            & 11.5          & 24.08          & \textbf{24.45} & 19.25         & 13.1                 \\ \hline
5                     & 40         & 39.5       & 39         & 38.5       & 38         & 15         & 14.5       & 14         & 13.5       & 13            & 12.5          & 12            & 24.4           & \textbf{24.62} & 22            & 14.64                \\ \hline
6                     & 40         & 39.5       & 39         & 38.5       & 38         & 37.5       & 15         & 14.5       & 14         & 13.5          & 13            & 12.5          & 24.7           & \textbf{24.77} & 24.65         & 16.13                \\ \hline
7                     & 40         & 39.5       & 39         & 38.5       & 38         & 37.5       & 37         & 15         & 14.5       & 14            & 13.5          & 13            & 24.89          & \textbf{24.91} & 24.71         & 17.56                \\ \hline
8                     & 40         & 39.5       & 39         & 38.5       & 38         & 37.5       & 37         & 36.5       & 15         & 14.5          & 14            & 13.5          & \textbf{25.07} & 25.04          & 24.76         & 18.93                \\ \hline
9                     & 40         & 39.5       & 39         & 38.5       & 38         & 37.5       & 37         & 36.5       & 36         & 15            & 14.5          & 14            & \textbf{25.23} & 25.11          & 24.81         & 20.25                \\ \hline
10                    & 40         & 39.5       & 39         & 38.5       & 38         & 37.5       & 37         & 36.5       & 36         & 35.5          & 15            & 14.5          & \textbf{25.32} & 25.18          & 24.86         & 21.51                \\ \hline
11                    & 40         & 39.5       & 39         & 38.5       & 38         & 37.5       & 37         & 36.5       & 36         & 35.5          & 35            & 15            & \textbf{25.4}  & 25.24          & 24.9          & 22.72                \\ \hline
12                    & 40         & 39.5       & 39         & 38.5       & 38         & 37.5       & 37         & 36.5       & 36         & 35.5          & 35            & 34.5          & \textbf{25.47} & 25.29          & 24.93         & 23.86                \\ \hline
13                    & 40         & 37         & 34         & 31         & 28         & 25         & 22         & 19         & 16         & 13            & 10            & 7             & \textbf{22.84} & 22.03          & 20.11         & 14.24                \\ \hline
14                    & 11         & 10.5       & 10         & 9.5        & 9          & 8.5        & 8          & 7.5        & 7          & 6.5           & 6             & 5.5           & 4.25           & 4.46           & \textbf{4.54} & 4.11                 \\ \hline
\end{tabular}
\end{table*}

\begin{table*}[]
\centering
\caption{Performances of $m$-user uplink NOMA $(m = 2,3,4,6)$ and OMA systems with 12 different channel users}
\label{ul-compr}
\begin{tabular}{|c|c|c|c|c|c|c|c|c|c|c|c|c|c|c|c|c|c|}
\hline
\multirow{3}{*}{Case} & \multicolumn{12}{c|}{\multirow{2}{*}{Normalized channel gain (dB)}}                                                                                                     & \multicolumn{5}{c|}{Sum-throughput (Mbps)}                                \\ \cline{14-18} 
                      & \multicolumn{12}{c|}{}                                                                                                                                             & \multicolumn{4}{c|}{NOMA}                           & \multirow{2}{*}{OMA} \\ \cline{2-17}
                      & $\gamma_1$ & $\gamma_2$ & $\gamma_3$ & $\gamma_4$ & $\gamma_5$ & $\gamma_6$ & $\gamma_7$ & $\gamma_8$ & $\gamma_9$ & $\gamma_{10}$ & $\gamma_{11}$ & $\gamma_{12}$ & $6$-UEs        & $4$-UEs        & $3$-UEs & $2$-UEs &                      \\ \hline \hline
1                     & 40         & 20         & 18.5       & 17         & 15.5       & 14         & 12.5       & 11         & 9.5        & 8             & 6.5           & 5             & \textbf{12.90} & 11.28          & 10.34   & 9.17    & 7.14                 \\ \hline
2                     & 40         & 38.5       & 20         & 18.5       & 17         & 15.5       & 14         & 12.5       & 11         & 9.5           & 8             & 6.5           & \textbf{18.34} & 15.50          & 13.86   & 11.91   & 8.93                 \\ \hline
3                     & 40         & 38.5       & 37         & 20         & 18.5       & 17         & 15.5       & 14         & 12.5       & 11            & 9.5           & 8             & 18.97          & \textbf{19.05} & 16.88   & 14.33   & 10.59                \\ \hline
4                     & 40         & 38.5       & 37         & 35.5       & 20         & 18.5       & 17         & 15.5       & 14         & 12.5          & 11            & 9.5           & \textbf{19.59} & 19.37          & 19.40   & 16.42   & 12.11                \\ \hline
5                     & 40         & 38.5       & 37         & 35.5       & 34         & 20         & 18.5       & 17         & 15.5       & 14            & 12.5          & 11            & \textbf{19.83} & 19.68          & 19.58   & 18.17   & 13.48                \\ \hline
6                     & 40         & 38.5       & 37         & 35.5       & 34         & 32.5       & 20         & 18.5       & 17         & 15.5          & 14            & 12.5          & \textbf{20.06} & 19.98          & 19.76   & 19.58   & 14.69                \\ \hline
7                     & 40         & 38.5       & 37         & 35.5       & 34         & 32.5       & 31         & 20         & 18.5       & 17            & 15.5          & 14            & \textbf{20.17} & 20.08          & 19.93   & 19.65   & 15.75                \\ \hline
8                     & 40         & 38.5       & 37         & 35.5       & 34         & 32.5       & 31         & 29.5       & 20         & 18.5          & 17            & 15.5          & \textbf{20.27} & 20.17          & 20.09   & 19.72   & 16.64                \\ \hline
9                     & 40         & 38.5       & 37         & 35.5       & 34         & 32.5       & 31         & 29.5       & 28         & 20            & 18.5          & 17            & \textbf{20.32} & 20.25          & 20.13   & 19.78   & 17.36                \\ \hline
10                    & 40         & 38.5       & 37         & 35.5       & 34         & 32.5       & 31         & 29.5       & 28         & 26.5          & 20            & 18.5          & \textbf{20.36} & 20.28          & 20.16   & 19.84   & 17.91                \\ \hline
11                    & 40         & 38.5       & 37         & 35.5       & 34         & 32.5       & 31         & 29.5       & 28         & 26.5          & 25            & 20            & \textbf{20.38} & 20.30          & 20.19   & 19.89   & 18.29                \\ \hline
12                    & 40         & 38.5       & 37         & 35.5       & 34         & 32.5       & 31         & 29.5       & 28         & 26.5          & 25            & 23.5          & \textbf{20.40} & 20.32          & 20.22   & 19.92   & 18.49                \\ \hline
13                    & 40         & 37         & 34         & 31         & 28         & 25         & 22         & 19         & 16         & 13            & 10            & 7             & \textbf{18.65} & 18.35          & 17.96   & 16.93   & 12.89                \\ \hline
14                    & 20         & 18.5       & 17         & 15.5       & 14         & 12.5       & 11         & 9.5        & 8          & 6.5           & 5             & 3.5           & \textbf{6.47}  & 6.41           & 6.33    & 6.11    & 5.23                 \\ \hline
\end{tabular}
\end{table*}

\section{Numerical Results and Discussions}
In this section, we investigate the throughput performances of the downlink and uplink NOMA systems, using our proposed user grouping and optimal power allocation solutions. In our simulations, 2, 3, 4, and 6 units of resource blocks are allocated for $2$-, $3$-, $4$-, and $6$-user NOMA clusters, respectively. Both uplink and downlink NOMA systems are also compared with OFDMA-based LTE/LTE-Advanced systems. In addition, the total downlink transmission power is uniformly allocated among the available resource blocks. The major simulation parameters are shown in  \textbf{Table~\ref{prmeters}}.

\begin{table} [H]
\centering
\caption{Simulation parameters for downlink NOMA and uplink NOMA}
\label{prmeters}
\begin{tabular}{|c|c|}
\hline
Parameter         & Value                          \\
\hline \hline
System effective bandwidth      & $20$ MHz                       \\
\hline
Bandwidth of a resource block, $B$ & $180$ kHz                             \\
\hline 
Number of available resource units & $100$                              \\
\hline 
Downlink Transmit power budget, $P_T$     & $46$ dBm                       \\
\hline
Uplink Transmit power budget, $P_t^\prime$       & $24$ dBm                       \\
\hline
SIC receiver's detection threshold, $P_{tol}$    & $10$ dBm  \\
\hline
Number of transmit antenna at both of BS and UE end    & $1$  \\
\hline
Number of receive antenna at both of BS and UE end   & $1$  \\
\hline                      
\end{tabular}
\end{table}

\subsection{Downlink NOMA}
In this subsection, we compare the performance of NOMA with OFDMA in terms of sum-throughput and individual users' throughput. Further, we compare the overall throughput performance of $2$-user, $3$-user, and $4$-user downlink NOMA systems by considering $12$ active downlink users.
\\
\subsubsection{\textbf{Throughput Comparison between NOMA and OMA Systems}}
\begin{figure}[h]
\begin{center}
	\includegraphics[width=3.5 in]{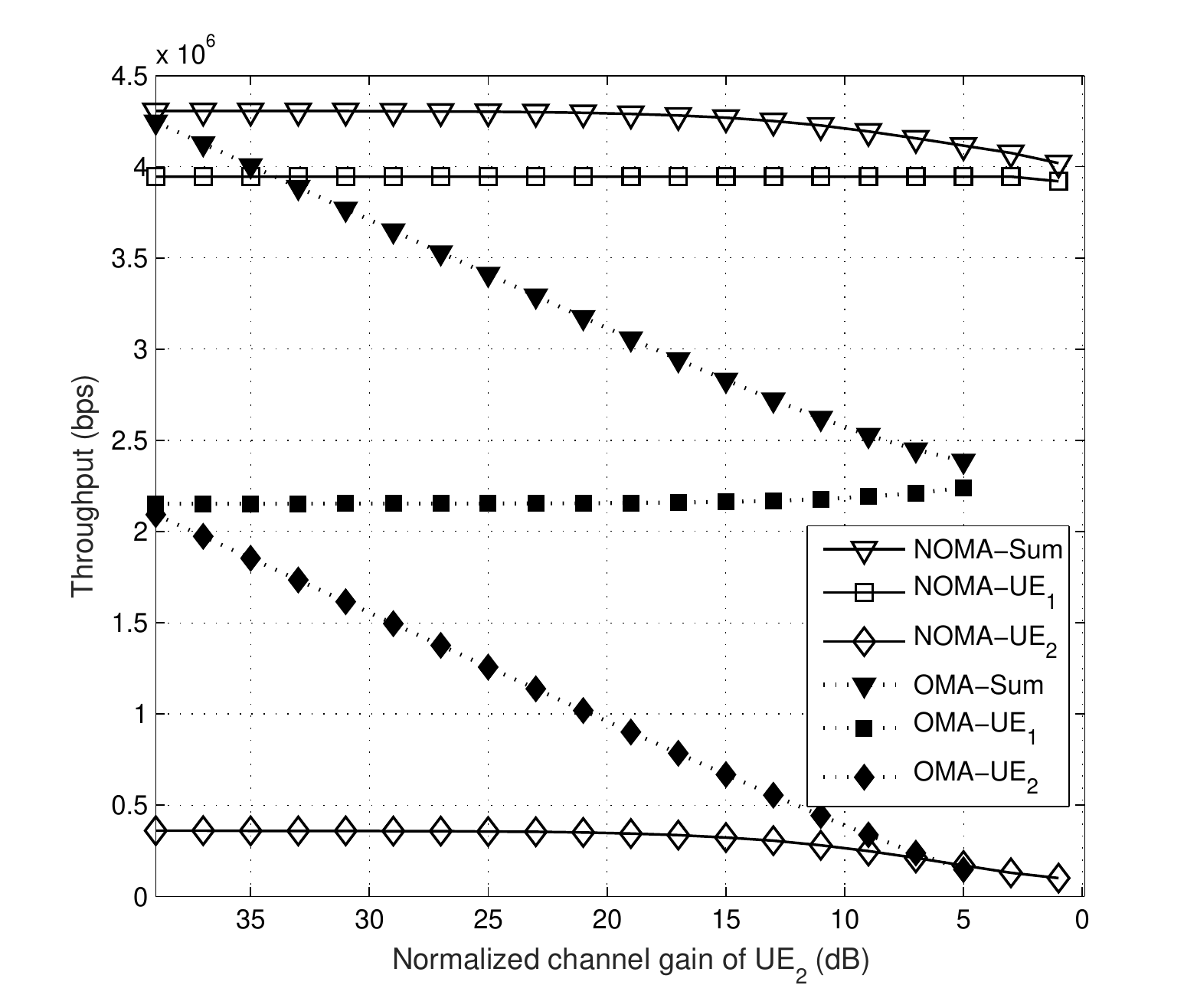}
	\caption{Throughput performance of $2$-user downlink NOMA and OMA systems assuming 100 Kbps minimum data rate. Normalized channel gain of $UE_1$ is 40 dB.}
	\label{fig:1}
 \end{center}
\end{figure}
\begin{figure}[h]
\begin{center}
	\includegraphics[width=3.5 in]{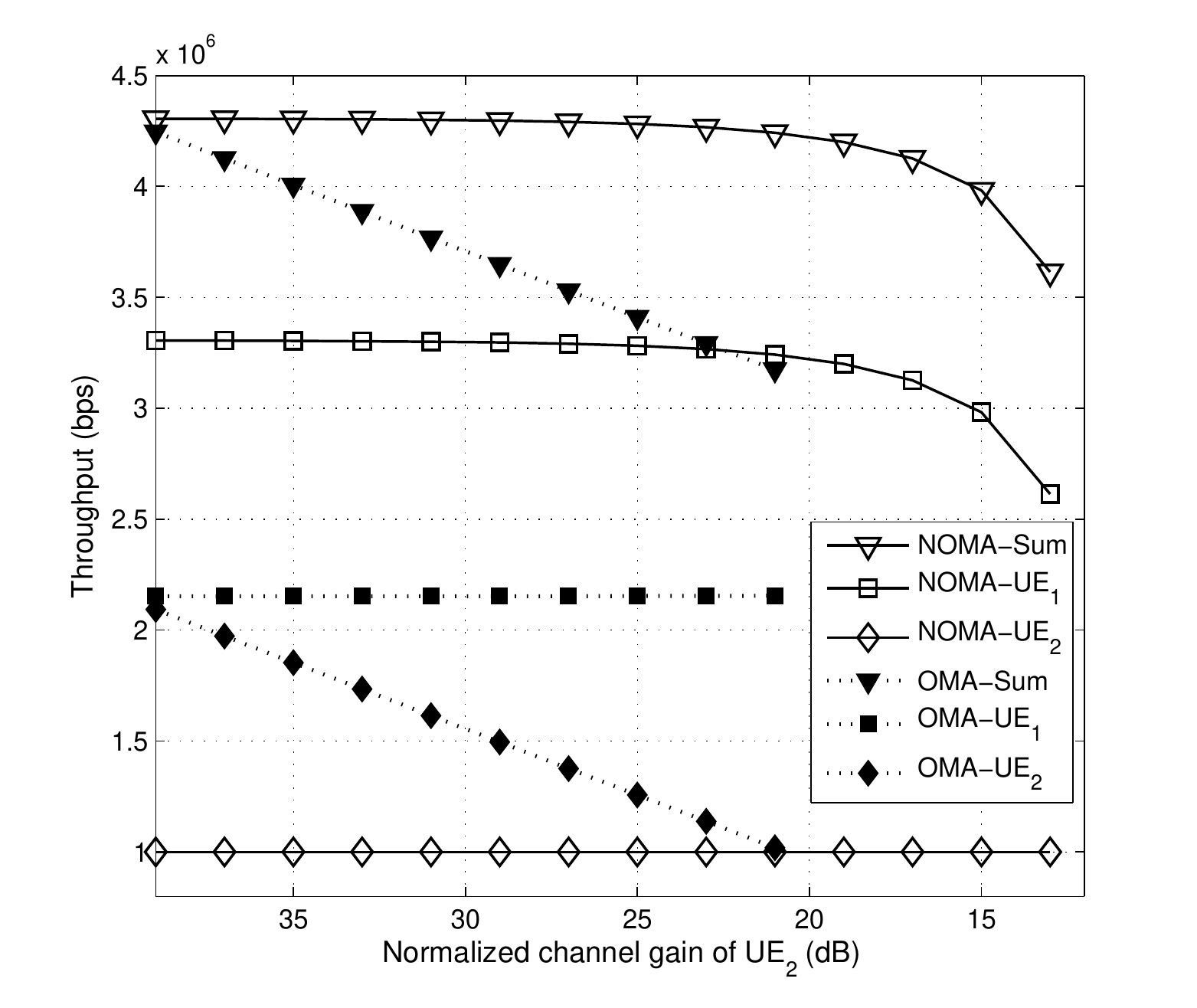}
	\caption{Throughput performance of $2$-user downlink NOMA and OMA systems assuming 1 Mbps minimum data rate. Normalized channel gain of $UE_1$ is 40 dB.}
	\label{fig:2}
 \end{center}
\end{figure}

\begin{figure}[h]
\begin{center}
	\includegraphics[width=3.6 in]{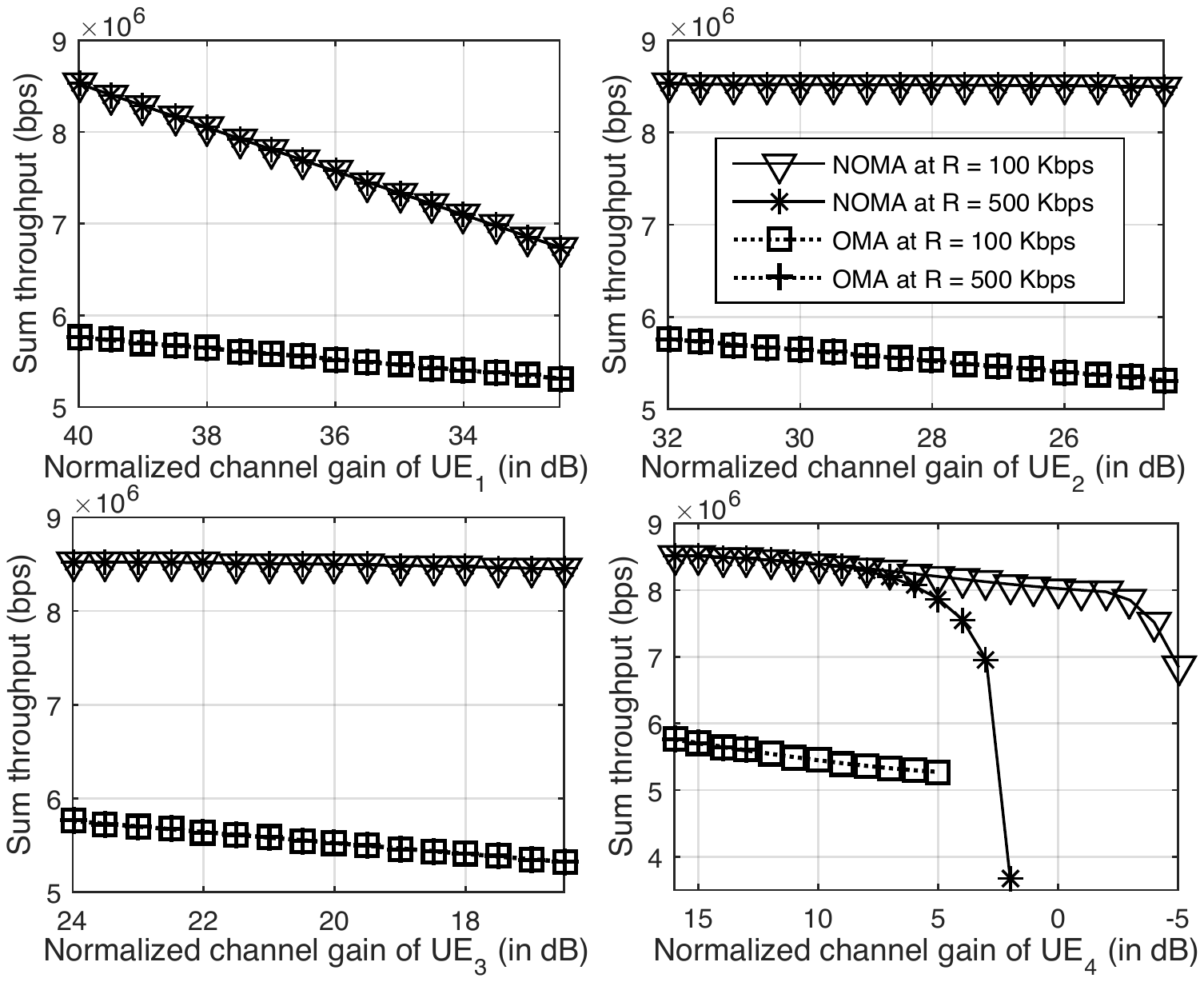}
	\caption{
Impact of channel variation on the sum-throughput of $4$-user downlink NOMA and OMA systems. Initial normalized channel gains of $UE_1$, $UE_2$, $UE_3$, and $UE_4$ are 40 dB, 32 dB, 24 dB and 16 dB, respectively.  We assume $R=$ 100 Kbps and $R=$ 500 Kbps, where $R=R_1=R_2=R_3=R_4$.
}
\label{fig:4}
 \end{center}
\end{figure}
\figref{fig:1} and \figref{fig:2} show the sum-throughput and individual throughput of $2$-user downlink NOMA cluster as well as its corresponding OMA system for minimum data rate requirements of $100$ Kbps and $1$ Mbps, respectively. The channel gain of the higher channel gain user is fixed at 40 dB whereas the channel gain of weak user varies. Further, \figref{fig:4} represents the sum-throughput of $4$-user downlink NOMA cluster and its corresponding OMA system as a function of each user channel variations. In \figref{fig:4}, the initial channel gains of $UE_1$, $UE_2$, $UE_3$, and $UE_4$ are set as $\gamma_1 = 40$ dB, $\gamma_2 = 32$ dB, $\gamma_3 = 24$ dB, and $\gamma_4 = 16$ dB, respectively, and in each sub-figure only one user's channel is varied by ensuring $\gamma_1>\gamma_2>\gamma_3>\gamma_4$. From these  simulation results we have the following observations:

\begin{itemize}
\item[\checkmark] Sum-throughput of downlink NOMA is  always better than OMA at any channel conditions. However, a significant throughput gain can be achieved  for more distinct channel conditions of users in a cluster.

\item[\checkmark]  Individual throughput of the highest channel gain user in a NOMA cluster is significantly higher than that in OMA. However, the lowest channel gain user's throughput is limited by its minimum rate requirement. To 
address this issue in NOMA,  the minimum rate requirements of different users can be dynamically adjusted to enhance the  fairness among users.

\item[\checkmark] Sum-throughput of downlink NOMA strongly depends on the  highest channel gain user within a cluster. The reason behind  is that the strongest channel gain user can cancel all interfering signals before decoding its own signal, thus its achievable data rate does not depend on inter-user interference.

\item[\checkmark] 
The impact of the lowest channel gain user is minimal on the cluster sum-throughput, unless  the channel gain is so small that a huge power is required by the lowest channel gain user. At this point, a sharp decay of sum-throughput is observed. Note that the traditional OMA is unable to operate at such poor channel gains.

\item[\checkmark]The channel variations of all users, except the highest and lowest channel gain users, in a downlink NOMA cluster do not considerably  affect the sum-throughput of a NOMA cluster (see \figref{fig:4}(b) and \figref{fig:4}(c)). 

\end{itemize}

\subsubsection{\textbf{Throughput Comparison of Various Downlink NOMA Systems}}
As given in {\bf Lemma~1}, the higher number of users in a downlink NOMA cluster significantly reduces the amount of available power for the strongest channel user, who generally contributes maximum throughput in a cluster. Therefore, it is crucial to select the correct cluster size. However, the throughput performance of a NOMA cluster depends significantly on three parameters, i.e., cluster size, transmit power, and channel gains of users. For a particular set of transmit powers and channel gains of users, we can find a cluster size that generally maximizes the sum throughput.  {\bf Table~\ref{dl-compr}} represents the sum-throughput of different NOMA and OMA systems with $12$ downlink users for various channel conditions while the transmit power is fixed in all cases.

In {\bf Table~\ref{dl-compr}}, we order the users according to their channel gains in descending  order. There are 6, 4, and 3 clusters with $2$, $3$, and $4$ users in a cluster, respectively. The throughput performances of the aforementioned downlink NOMA clusters and their respective OMA counterparts are demonstrated in {\bf Table~\ref{dl-compr}}. The main observations are as follows:
\begin{itemize}
\item[\checkmark] {\bf {\textit{Less distinct channel gains of users \textit{(cases 12 and 14)}}}}: In this case, the throughput of different downlink NOMA systems is nearly the same. However, the $4$-user NOMA achieves a higher throughput at better channel gains \textit{(case 12)}, while the $2$-user NOMA obtains higher throughput at lower channel gains \textit{(case 14)}. As such, we can conclude that higher cluster size is preferred for higher channel gains of the users and lower cluster size should be preferred  for lower channel gains of the users. The overall throughput gains of downlink NOMA over OMA are very limited.

\item[\checkmark] {\bf {\textit{More distinct channel gains of users~{(case 13):}}}} In this case, NOMA systems outperform their OMA counterparts. It can be seen that the $4$-user NOMA achieves a better throughput than $2$-user and $3$-user NOMA systems. Therefore, higher cluster size can be selected in such cases as long as the power allocation to the highest channel gain does not decrease significantly {\bf(Lemma 1)}.

\item[\checkmark]  {\bf {\textit{Number of higher channel gain users  equals to the number of  clusters~\textit{(cases 3, 4, 6):}}}} In such a case, each downlink NOMA system achieves maximum relative throughput gain compared to OMA.
In  {\bf Table~\ref{dl-compr}}, the $4$-user downlink NOMA system (i.e., $3$ clusters) achieves maximum 106.3\% throughput gain over OMA system when the number of good channels are equal to $3$  (\textit{case 3}). However,  the $3$-user (i.e., $4$ clusters) and $2$-user (i.e., $6$ clusters)  systems achieve maximum throughput gains of 86.6\% and 52.8\%, respectively, over OMA systems (\textit{case 4} and \textit{case 6}).
Thus, a NOMA system with the number of higher channel gain users equal to the number of clusters either achieves maximum or close to maximum throughput among all NOMA systems.

\item[\checkmark] In general, the throughputs of all NOMA systems are quite similar when 50\% or more users experience good channels (\textit{cases 6 to 12}). 
However, if the higher channel gain   users become limited, then the higher cluster sizes should be selected to enhance throughput (\textit{cases 1 to 3}).

\end{itemize}  

\begin{figure}[h]
\begin{center}
	\includegraphics[width=3.5 in]{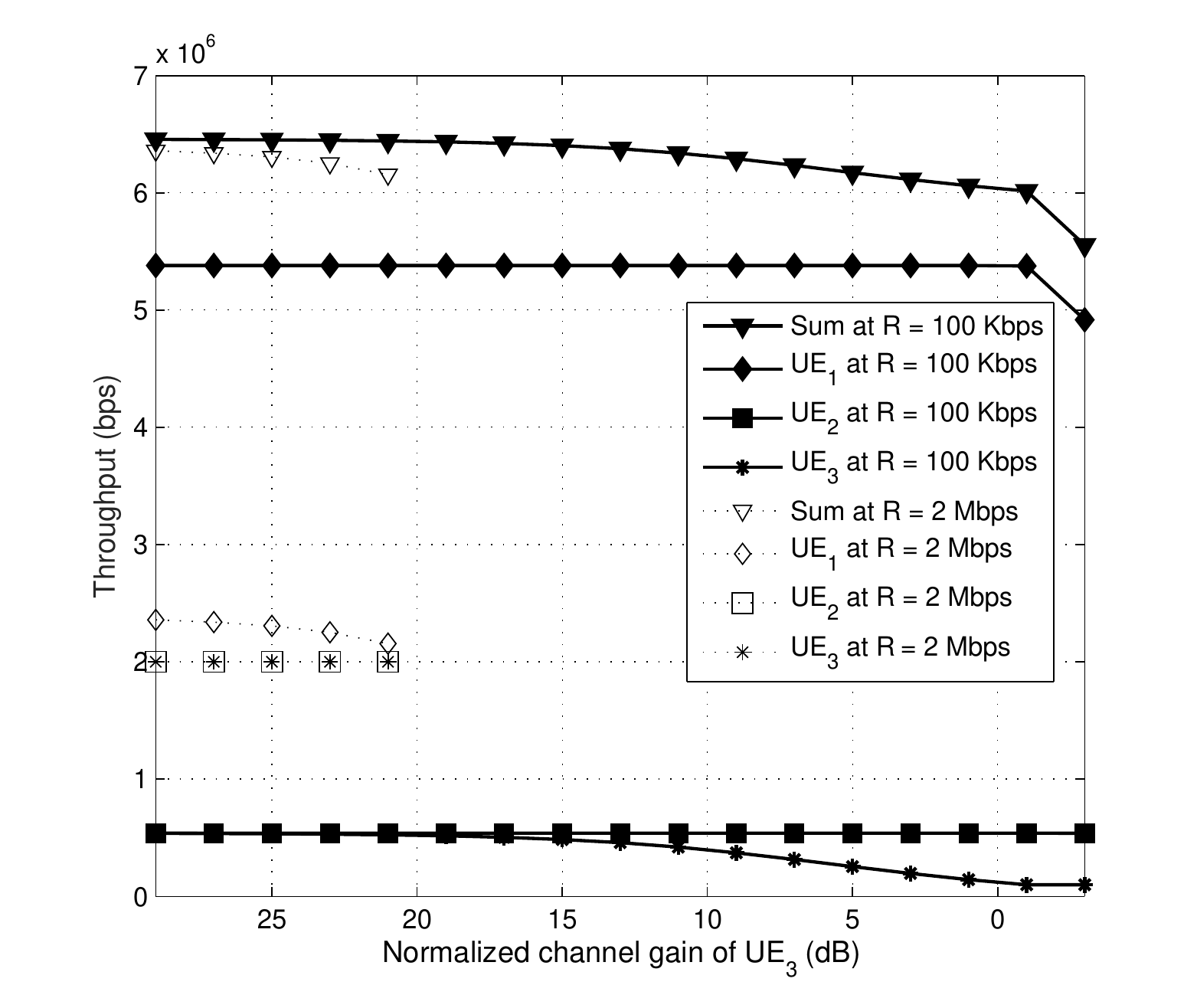}
	\caption{Throughput performances of $3$-user downlink NOMA  system, for $R=R_1=R_2=R_3=$ 100 Kbps and 2 Mbps. Normalized channel gains of $UE_1$ and $UE_2$ are 40 dB and  30 dB, respectively.}
	\label{fig:f}
 \end{center}
\end{figure}
Finally, \figref{fig:f} shows the individual users' throughput for $3$-user downlink NOMA cluster considering two different minimum rate requirements of users, i.e., 100 Kbps and 2 Mbps. In general, our optimal power allocations  maximize the transmit power of the highest channel gain user while maintaining the minimum rate requirements of all  users in a NOMA cluster. 
However, in such a case, the lower channel gain users may experience significant throughput difference when compared to the highest channel gain user, as discussed in Section~VI.A.1. As such, to improve the throughput fairness among users, the minimum  data rate requirements of the users can be adjusted further to balance the trade-off between  fairness and the overall system throughput.

\subsection{Performance Evaluation of Uplink NOMA}
In this subsection, we compare the throughput performances of uplink NOMA and OMA systems. We also compare the overall and individual user's throughput of $2$-user, $3$-user, $4$-user, and $6$-user uplink NOMA systems for $12$ uplink users. 
\\
\subsubsection{\textbf{Throughput Comparisons Between NOMA and OMA}}
The sum-throughput and individual throughput of $2$-user uplink NOMA cluster is shown in \figref{fig:6}, where the minimum user rate requirement is $100$~Kbps.  It can be seen  that the sum throughput of NOMA outperforms OMA  with more distinct channel gains of users in a cluster. Also, the NOMA sum-throughput remains  higher than OMA, regardless of the weakest channel. On the other hand, when $UE_1$ and $UE_2$ have nearly the same channel gains, $UE_2$ achieves better data rate. However, due to high interference of $UE_2$, $UE_1$ obtains very low data rate. As the channel gain of $UE_2$ reduces, its  throughput gradually reduces. However, the interference on $UE_1$ also reduces which improves the achievable throughput of $UE_1$. As such, the sum throughput of uplink NOMA cluster remains almost unchanged.

When the minimum rate requirement of both users is set as high as $1$ Mbps, \figref{fig:7} depicts the sum-throughput and individual throughput of $2$-user NOMA cluster as well as the sum throughput of $2$-user OMA system. It is observed that the sum-throughput of NOMA becomes less than the corresponding OMA system in the region when the channel gains are less distinct and power control is applied. The reason is that, without power control, the sum data rate of $2$-user uplink NOMA system and corresponding OMA system is nearly similar. Therefore, after applying power control in NOMA, sum data rate is further reduced and goes below OMA. We further note that, since the power control is only applicable at the weakest channel gain user, its impact keeps diminishing gradually for  $3$-user and beyond uplink NOMA clusters (see in \figref{fig:7new}).  \figref{fig:7new} shows the sum-throughput and individual throughput of $3$-user uplink NOMA cluster and corresponding sum throughput of OMA system, where the minimum individual rate requirement is $1$ Mbps.
It is evident that the sum-throughputs of $3$-user and beyond uplink NOMA clusters remain always better than those of the OMA systems.

\begin{figure}[h]
\begin{center}
	\includegraphics[width=3.5 in]{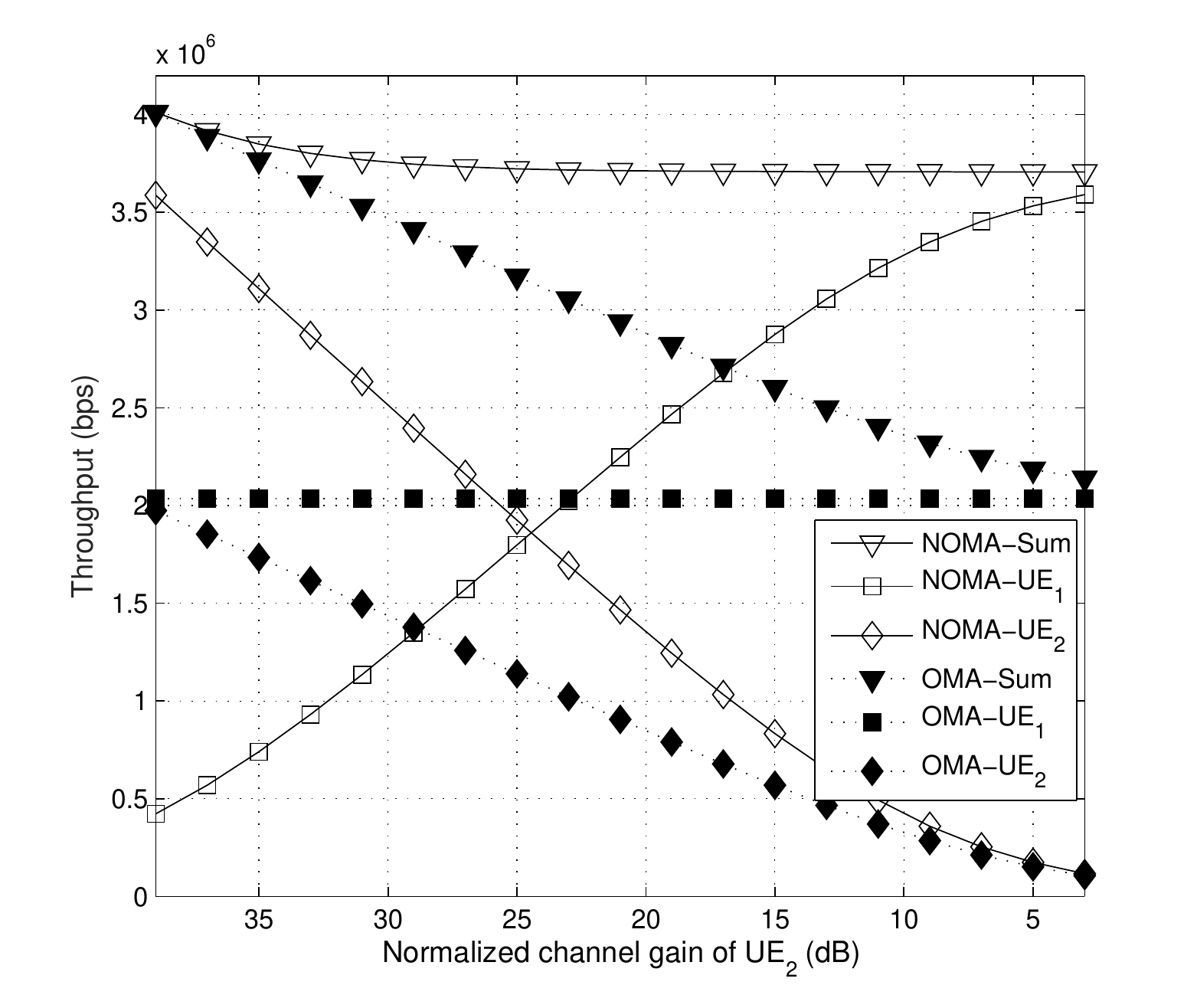}
	\caption{Throughput performance of $2$-user uplink NOMA and OMA systems assuming 100 kbps minimum data rate. Normalized channel gain of $UE_1$ is 40 dB.}
	\label{fig:6}
 \end{center}
\end{figure}
\begin{figure}[h]
\begin{center}
	\includegraphics[width=3.5 in]{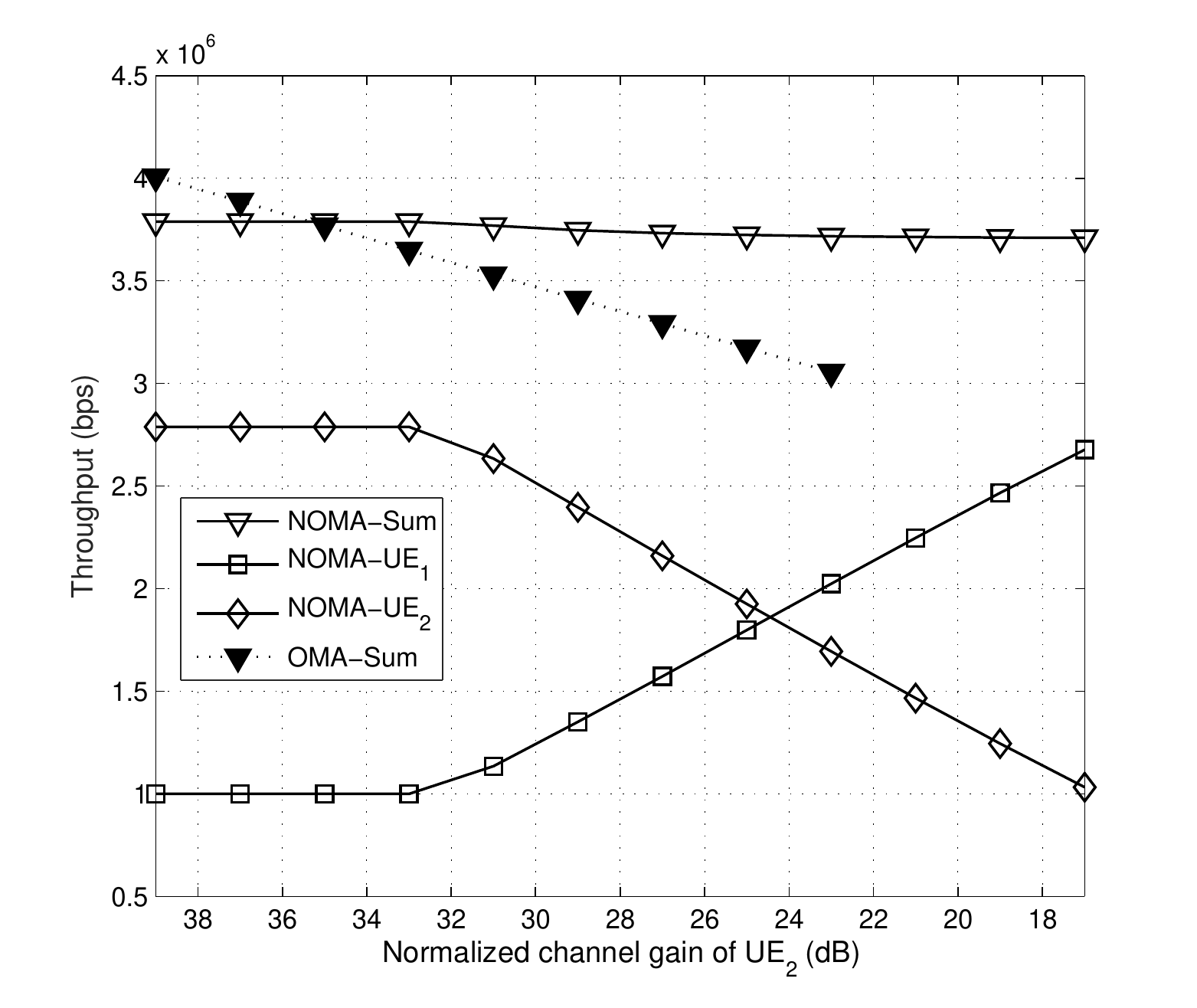}
	\caption{Throughput performance of $2$-user uplink NOMA and OMA systems assuming 1 Mbps minimum data rate. Normalized channel gain of $UE_1$ is 40 dB.}
	\label{fig:7}
 \end{center}
\end{figure}
\begin{figure}[h]
\begin{center}
	\includegraphics[width=3.5 in]{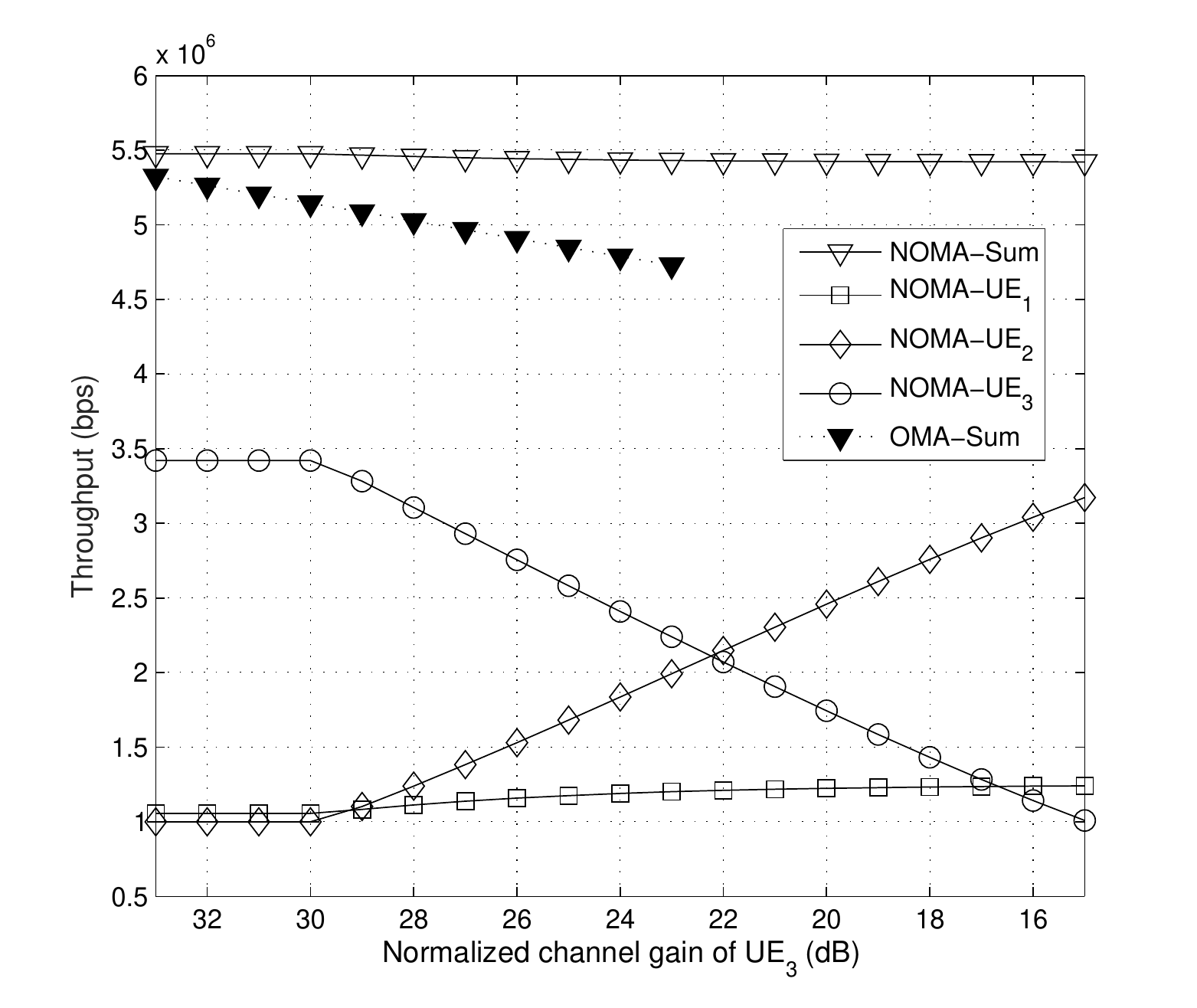}
	\caption{Throughput performance of $3$-user uplink NOMA and OMA systems assuming 1 Mbps minimum data rate. Normalized channel gains of $UE_1$ and $UE_2$ are 40 dB and 34~dB, respectively.}
	\label{fig:7new}
 \end{center}
\end{figure}
\figref{fig:8} demonstrates  the sum-throughput of $4$-user uplink NOMA cluster as a function of individual user's channel gain variation. The normalized channel gains of $UE_1$, $UE_2$, $UE_3$, and $UE_4$ are  set at 40 dB, 32 dB, 24 dB and 16 dB, respectively. In each sub-figure, the channel gain of each user is varied while others  remain fixed. For example, in case (a), the normalized channel gain of  $UE_1$ is varied from 40 dB to 32.5 dB while that channel gains of other users are fixed. It is observed that the  sum-throughput of uplink NOMA mainly depends on the channel conditions of the highest channel gain user. 

\begin{figure}[h]
\begin{center}
	\includegraphics[width=3.6 in]{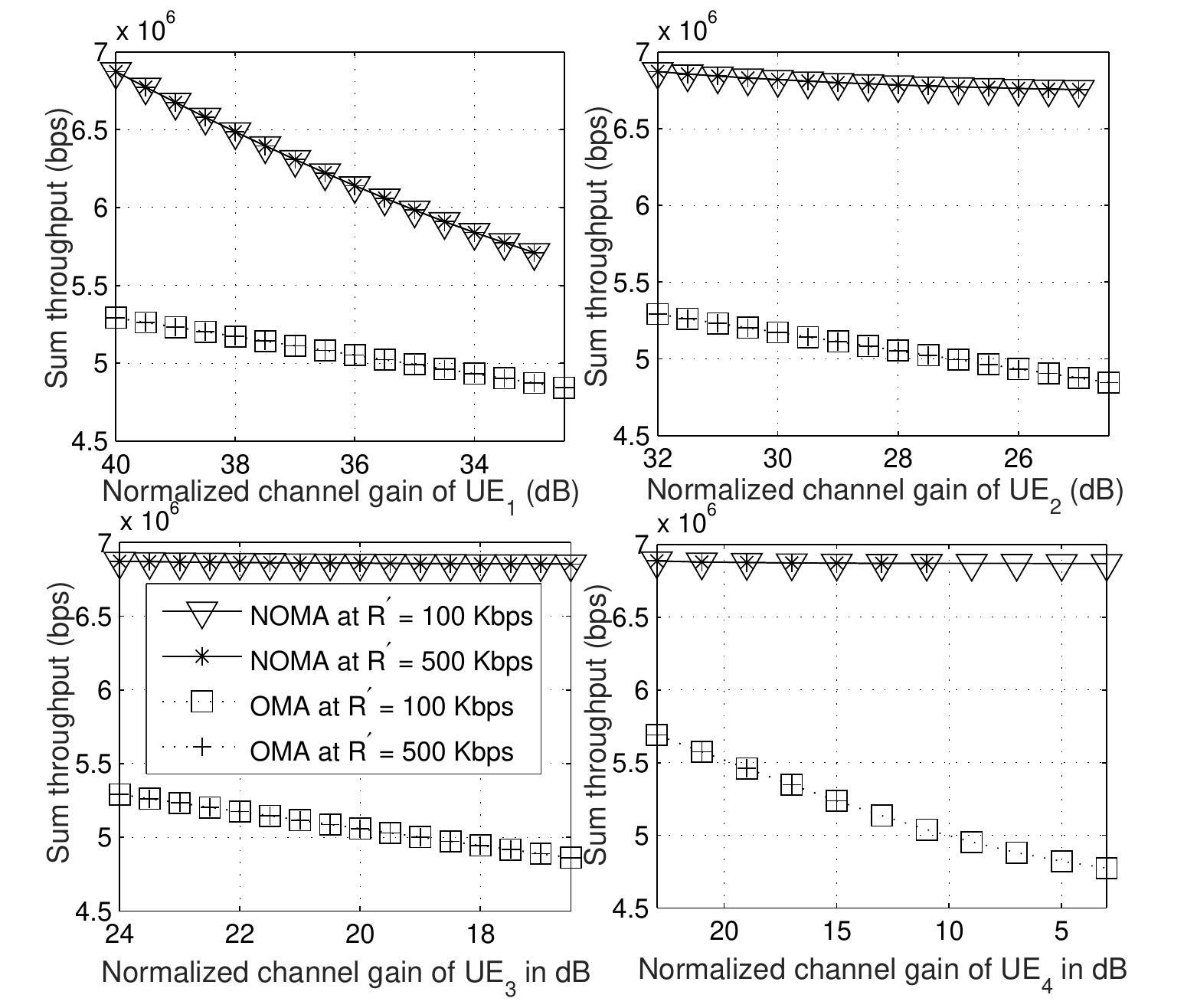}
	\caption{Impact of channel variation on the sum-throughput of $4$-user uplink NOMA and OMA systems. Initial normalized channel gains of $UE_1$, $UE_2$, $UE_3$, and $UE_4$ are 40 dB, 32 dB, 24 dB, and 16 dB, respectively.  We assume $R^\prime=$ 100 Kbps and $R^\prime=$ 500 Kbps, where $R^\prime=R_1^\prime=R_2^\prime=R_3^\prime=R_4^\prime$.}
	\label{fig:8}
 \end{center}
\end{figure}

\begin{figure}[h]
\begin{center}
	\includegraphics[width=3.6 in]{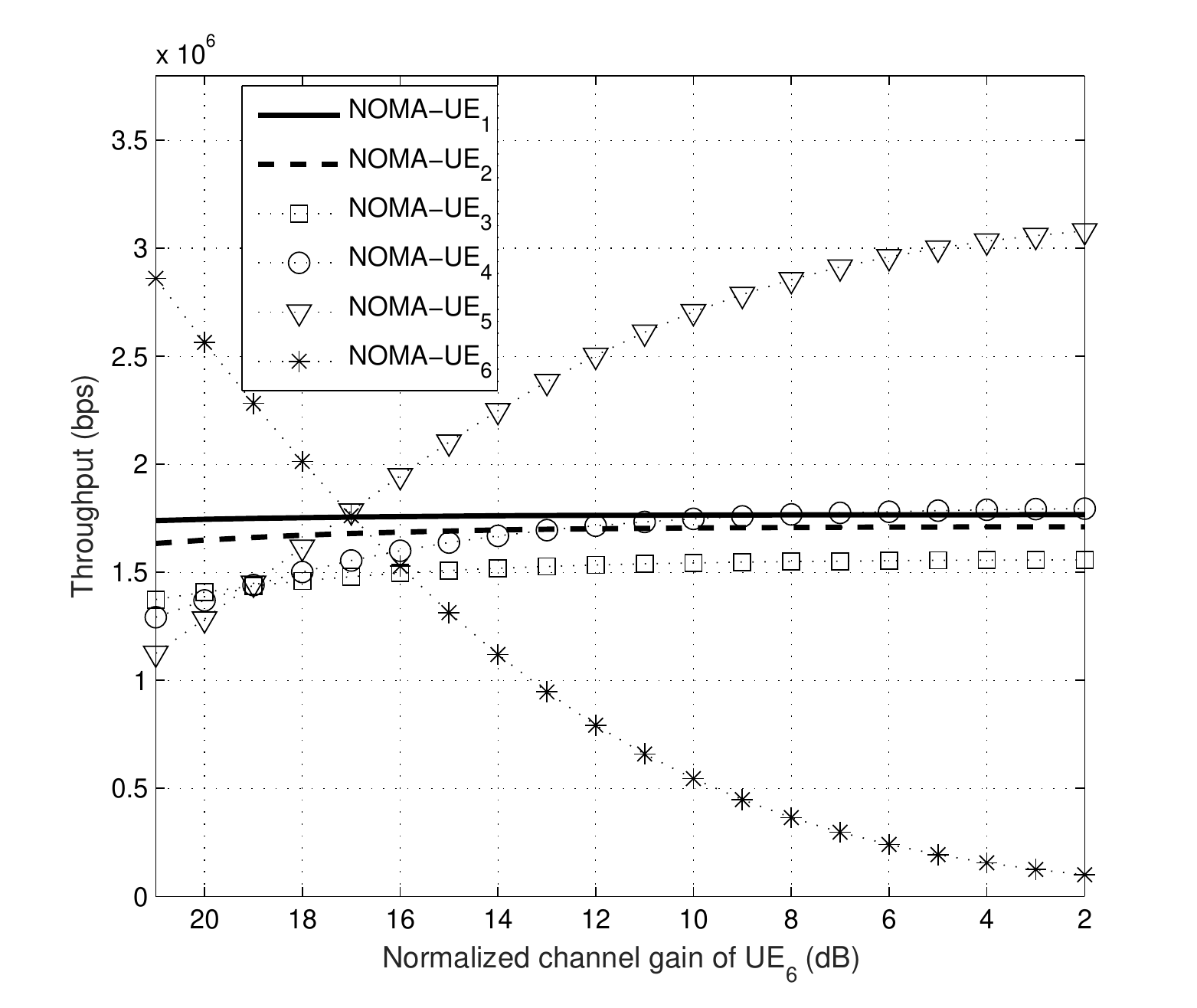}
	\caption{Throughput performance of $6$-user uplink NOMA cluster for 100 Kbps minimum data rate. Normalized channel gains of $UE_1$ to $UE_5$ are 40, 35, 30, 26, and 22~dB, respectively, while the normalized channel gain of $UE_6$ varies.}
	\label{fig:8new}
 \end{center}
\end{figure}
\
\subsubsection{\textbf{Comparisons Among Different Uplink NOMA Systems}}
The important feature of uplink NOMA system is that all lower channel gain users in a NOMA cluster interfere significantly to the higher channel gain users. Note that, due to SIC, low channel gain users do not experience any interference from high channel gain users. In contrast to downlink NOMA, the impact of transmission power control in not significant in uplink NOMA. In uplink NOMA, if more users need to be grouped into a cluster,  more distinct channels are required. With different channel conditions of $12$ uplink active users, we measure the performances of $2$-user, $3$-user, $4$-user, and $6$-user uplink NOMA systems as well as OMA system, shown in {\bf Table~\ref{ul-compr}}. The users are clustered according to the method discussed in Section~IV.C. The channel gains in  Table {\bf\ref{ul-compr}} are chosen in order to ensure the minimum channel distinctness required for $6$-user uplink NOMA system.

In Table~{\bf\ref{ul-compr}}, we sort 12  users according to the descending order of their channel gains. There are 6, 4, 3, and 2 clusters available for $2$-user, $3$-user, $4$-user, and $6$-user uplink NOMA systems, respectively. From Table~\ref{ul-compr}, we have the following key observations: 
\begin{itemize}
\item[\checkmark] {\bf {\textit{ More distinct channel gains of users~{(case 13):}}}} In this case, the performance of any uplink NOMA is much better than that of the OMA system (\textit{case 13}), whereas the higher order NOMA achieves higher throughput.

\item[\checkmark] {\bf {\textit{Less distinct channel gains of users \textit{(cases 12 and 14)}}}}: In this case, higher order uplink NOMA still performs better, although the throughput performance of various NOMA systems are quite similar. However, in this case, the throughput gains of NOMA over OMA are marginal. 

\item[\checkmark]  {\bf {\textit{Number of good channel users  equals the number of  clusters~\textit{(cases 2, 3, 4, 6):}}}} In such a case,  each uplink NOMA system achieves maximum relative throughput gain compared to OMA.
In {\bf Table~\ref{ul-compr}},  \textit{case 2} shows 105.4\% throughput gain of $6$-user uplink NOMA system in comparison to OMA system, while \textit{case 3, 4} and \textit{case 6} show the maximum throughput gain of $4$-user, $3$-user, and $2$-user uplink NOMA systems, respectively, where the gains are 79.9\%, 60.25\%, and 33.3\%, respectively. Thus, a NOMA system with the number of higher channel gain users equal to the number of clusters either achieves maximum or close to maximum throughput among all NOMA systems.


\item[\checkmark] When one or couple of users experience higher channel gains, higher order uplink NOMA systems perform much better than OMA systems \textit{(case 1, 2 and 3)}. As the number of  users experiencing higher channel gains increases, different uplink NOMA systems perform nearly similar \textit{(case 4 to 12)}.
\end{itemize}  
Fig. \ref{fig:8new} shows the individual throughput of $6$-user uplink NOMA cluster, where $UE_1$ to $UE_5$ experience 40, 35, 30, 26, and 22 dB channel gains, respectively.  Fig. \ref{fig:8new} shows good throughput fairness among $UE_1$ to $UE_4$, whilst, by selecting  proper channel gains for $UE_5$ and $UE_6$, all six users can get good throughput fairness.  Therefore, exploitation of channel gain differences among the NOMA users is the key to designing efficient uplink NOMA systems.

\section{Conclusion}
Efficient user clustering and power allocation among  NOMA users are the key design issues for successful NOMA operations.
In this paper, for both uplink and downlink NOMA, we have formulated a joint optimization problem for sum throughput maximization under the constraints of transmission power budget, minimum rate requirements of users, and SIC receiver's operation constraints. Due to the combinatorial nature of the problem, we have developed a low-complexity sub-optimal user clustering scheme. Given the user clustering, we have derived closed-form optimal power allocations for  $m$-user uplink/downlink NOMA systems.   In both of downlink and uplink NOMA, user clusters with more distinctive channel gains provides impressive throughput gain of NOMA over their counterpart OMA systems. Numerical results show that the performance of downlink NOMA decreases if the cluster size increases beyond a certain threshold. Despite  the numerous benefits of NOMA over OMA, the issues such as SIC error processing and inter-cell interference, are still under investigation. In this paper, we have considered ideal SIC; however, the performance of NOMA may depend significantly on the SIC errors. In  downlink NOMA, each signal needs to be identically encoded, modulated, and precoded at the BS while the SIC receiver needs to successively demodulate, decode, and re-modulate stronger signals. Therefore, for large NOMA clusters, error propagation in SIC may drastically reduce the NOMA performance.  Further, inter-cell interference is another major challenge for practical NOMA applications. The power allocation solutions need to be carefully extended for dense cellular networks in which inter-cell interference can be significant.

\appendices
\section*{Appendix~A: Proof of Satisfied KKT Condition for  Downlink NOMA}

\renewcommand{\theequation}{A.\arabic{equation}}
\setcounter{equation}{0}

In this Appendix, we show the verification of KKT conditions for a given  Lagrange multiplier combination in a general $m$-user downlink NOMA cluster considering a particular value of $m$. Let us consider $m=4$ which is a $4$-user downlink NOMA cluster. Then, the set of Lagrange multipliers can be expressed as $A = \{\lambda\}, \, B = \{\mu_1, \mu_2, \mu_3, \mu_4\} \, \text{and} \, C = \{\psi_2, \psi_3, \psi_4\}$. The solution sets that satisfy KKT conditions are, $S = \{\lambda, \mu_2\,\text{or}\,\psi_2, \mu_3\,\text{or}\,\psi_3, \mu_4\,\text{or}\,\psi_4\}$. Now, consider one set of Lagrange multipliers, say $S_1 = \{\lambda, \mu_2, \mu_3, \mu_4\}$, that satisfies the KKT conditions, and thus the values of the Lagrange multipliers are $\lambda^\ast > 0$, $\mu_2^\ast > 0$, $\mu_3^\ast > 0$, $\mu_4^\ast > 0$, and  $\mu_1^\ast = \psi_2^\ast = \psi_3^\ast = \psi_4^\ast =0$.

For the aforementioned Lagrange multipliers \eqref{opd4} to (\ref{opd6}) can be expressed as 
\begin{align}
P_t-\sum\limits_{i=1}^{4}P_i = 0   
\label{eqn:a1}
\end{align}
\begin{align}
P_i\gamma_i-\Big(\sum\limits_{j=1}^{i-1}P_j\gamma_i+\omega\Big)\Big(\varphi_i-1\Big) = 0, \, \forall\, i = 2,3,4, 
\label{eqn:a2}
\end{align}
\begin{align}
P_i\gamma_i-\Big(\sum\limits_{j=1}^{i-1}P_j\gamma_i+\omega\Big)\Big(\varphi_i-1\Big) > 0, \, \forall\, i = 1, 
\label{eqn:a3}
\end{align}
\begin{align}
P_i\gamma_{i-1}-\sum\limits_{j=1}^{i-1}P_j\gamma_{i-1} - P_{tol} > 0, \, \forall\, i = 2,3,4.
\label{eqn:a4}
\end{align} 
Equations (\ref{eqn:a1})-(\ref{eqn:a2}) provide the optimal solution of $P_1, P_2, P_3$ and $P_4$, while equations (\ref{eqn:a3})-(\ref{eqn:a4}) provide the necessary conditions of these optimal solutions. Now, solving the equations (\ref{eqn:a1})-(\ref{eqn:a2}), we obtain the optimal power allocations as follows:
\begin{align}
P_1 = \frac{P_t}{\varphi_2\varphi_3\varphi_4} - \frac{\omega(\varphi_2-1)}{\varphi_2\gamma_2} - \frac{\omega(\varphi_3-1)}{\varphi_2\varphi_3\gamma_3} - \frac{\omega(\varphi_4-1)}{\varphi_2\varphi_3\varphi_4\gamma_4}, \nonumber 
\end{align}
\begin{align}
P_2 =& \frac{P_t(\varphi_2-1)}{\varphi_2\varphi_3\varphi_4} +  \frac{\omega(\varphi_2-1)}{\varphi_2\gamma_2}- \frac{\omega(\varphi_2-1)(\varphi_3-1)}{\varphi_2\varphi_3\gamma_3} - \nonumber\\ 
&\frac{\omega(\varphi_2-1)(\varphi_4-1)}{\varphi_2\varphi_3\varphi_4\gamma_4}, \nonumber
\end{align}
\begin{align}
P_3 = \frac{P_t(\varphi_3-1)}{\varphi_3\varphi_4} + \frac{\omega(\varphi_3-1)}{\varphi_3\gamma_3} - \frac{\omega(\varphi_3-1)(\varphi_4-1)}{\varphi_3\varphi_4\gamma_4}, \nonumber 
\end{align}
\begin{align}
P_4 = \frac{P_t(\varphi_4-1)}{\varphi_4} + \frac{\omega(\varphi_4-1)}{\varphi_4\gamma_4}. \nonumber
\end{align}

Since $\varphi_i =2^{\frac{R_1}{\omega B}} > 1,\,\forall\,i=1,2,3,4$, the solutions of $P_1,P_2, P_3$ and $P_4$ all are positive. Now, with the considered Lagrange multipliers and positive $P_i,\,\forall\,i=1,2,3,4$, and after some algebraic operations, equations (\ref{opd2}) to (\ref{opd3}) can be expressed as
\begin{align}
\mathlarger{\mathlarger{‎‎\sum}}_{j=1}^{3}&\frac{\omega^2B(\gamma_j - \gamma_{j+1})}{\bigg(\sum\limits_{k=1}^{j}P_k\gamma_j+\omega\bigg)\bigg(\sum\limits_{k=1}^{j}P_k\gamma_{j+1}+\omega\bigg)} + \frac{\omega B\gamma_4}{\sum\limits_{l=1}^{4}P_l\gamma_4+\omega}, \enspace \nonumber \\
&= \lambda + \sum\limits_{j=2}^{4}(\varphi_j-1)\mu_j\gamma_j,
\label{eqn:a5}
\end{align}
\begin{align}
\mathlarger{\mathlarger{‎‎\sum}}_{j=i}^{3}&\frac{\omega^2B(\gamma_j - \gamma_{j+1})}{\bigg(\sum\limits_{k=1}^{j}P_k\gamma_j+\omega\bigg)\bigg(\sum\limits_{k=1}^{j}P_k\gamma_{j+1}+\omega\bigg)} + \frac{\omega B\gamma_4}{\sum\limits_{l=1}^{4}P_l\gamma_4+\omega} \enspace \nonumber \\
&= \lambda - \mu_i\gamma_i + \sum\limits_{j=i+1}^{4}(\varphi_j-1)\mu_j\gamma_j, \, \forall \, i =2,3,4.
\label{eqn:a6}
\end{align}

After performing some algebraic operations into equations (\ref{eqn:a5})-(\ref{eqn:a6}), we obtain the Lagrange multipliers as follows:
\begin{align}
\mu_2 = \frac{\omega^2 B(\gamma_1-\gamma_2)}{\varphi_2\gamma_2 (P_1\gamma_1+\omega)(P_1\gamma_2+\omega)}, 
\label{eqn:a7}
\end{align}
\begin{align}
\mu_i = \frac{\omega^2 B(\gamma_{i-1}-\gamma_i)}{\varphi\gamma_i \Big(\sum\limits_{j=1}^{i-1}P_j\gamma_{i-1}+\omega\Big)\Big(\sum\limits_{j=1}^{i-1}P_j\gamma_{i}+\omega\Big)} + \nonumber \\
\mu_{i-1}\gamma_{i-1}, \, \forall \, i =3,4,
\label{eqn:a8}
\end{align}
\begin{align}
\lambda = \frac{\omega B\gamma_4}{\sum\limits_{j=1}^{4}P_j\gamma_4+\omega} + \mu_4\gamma_4. 
\label{eqn:a9}
\end{align}

In our proposed dynamic power allocation solutions, we sort the UEs according to the descending order of their normalized channel gains, i.e $\gamma_1 > \gamma_2 > \gamma_3 > \gamma_4$. Therefore, the solutions for $\lambda$ and $\mu_i, \, \forall \, i = 2,3,4$, of equation (\ref{eqn:a7})-(\ref{eqn:a9}) all are positives. Therefore, the set of Lagrange multipliers $S_1 = \{\lambda, \mu_2, \mu_3, \mu_4\}$ satisfies the KKT conditions. All the other cases can easily be verified by using  a similar approach.


\section*{Appendix-B: Proof of Satisfied KKT Conditions for  Uplink NOMA}
\renewcommand{\theequation}{B.\arabic{equation}}
\setcounter{equation}{0}

Similar to downlink NOMA cluster, we show the KKT verification of a general $m$-user uplink NOMA cluster by considering a particular combination of Lagrange multipliers in a particular cluster size. Let again consider $m=4$, i.e., a $4$-user uplink NOMA cluster. Then, the sets of Lagrange multipliers can be expressed as $A = \{\lambda_1, \lambda_2, \lambda_3, \lambda_4\}, \, B = \{\mu_1, \mu_2, \mu_3, \mu_4\} \, \text{and} \, C = \{\psi_1, \psi_2, \psi_3\}$. Therefore, the solution sets that satisfy the KKT conditions are: $S = \{\lambda_1, \lambda_2, \lambda_3, \lambda_4\,\text{or}\,\mu_3\,\text{or}\,\psi_3\}$. Now, consider one set of Lagrange multipliers, say, $S_1 = \{\lambda_1, \lambda_2, \lambda_3, \mu_3\}$, satisfies the KKT conditions, and thus the value of the Lagrange multipliers are $\lambda_i^\ast > 0$, $\mu_3^\ast > 0$, and  $\lambda_4^\ast = \mu_j^\ast=\psi_k^\ast=0, \,\forall \, i = 1,2,3, \, \forall \, j = 1,2,4, \, \forall \,  k = 1,2,3$.

Now, with the aforementioned Lagrange multipliers, equations (\ref{opu4})-(\ref{opu6}) can be expressed as
\begin{align}
P_t^\prime-P_i = 0, \, \forall \, i = 1,2,3,
\label{equ:b1}
\end{align}
\begin{align}
P_3\gamma_3- \phi_3P_4\gamma_4 - \phi_3 \omega = 0 ,
\label{equ:b2}
\end{align}
\begin{align}
P_t^\prime-P_4 > 0, 
\label{equ:b3}
\end{align}
\begin{align}
P_i\gamma_i-\phi_i \sum\limits_{j = i+1}^{4}P_j\gamma_j- \phi_i \omega > 0, \, \forall \, i > 1,2,4,
\label{equ:b4}
\end{align}
\begin{align}
P_i \gamma_i - \sum\limits_{j = i+1}^{4}P_j\gamma_j - P_{tol} > 0,  \quad \forall \, i = 1,2,3.
\label{equ:b5}
\end{align}
Equations (\ref{equ:b1})-(\ref{equ:b2}) provide the optimal solution of $P_1, P_2, P_3$ and $P_4$, while equations (\ref{equ:b3})-(\ref{equ:b5}) provide the necessary conditions of these optimal solutions. Now, solving equations (\ref{equ:b1})-(\ref{equ:b2}), we obtain the optimal power allocations as follows:
\begin{align*}
P_i &= P_t^\prime, \,\forall \, i = 1,2,3,
\end{align*}
\begin{align*}
P_4 &= \frac{P_t^\prime\gamma_3}{\phi_3\gamma_4} - \frac{\omega}{\gamma_4}.
\end{align*}

Since $\varphi_i =2^{\frac{R^\prime_1}{\omega B}}-1 > 0,\,\forall\,i=1,2,3,4$, the solution of $P_1,P_2, P_3$ and $P_4$ all are positive. Now, with the considered Lagrange multipliers and the resultant positive $P_i,\,\forall\,i=1,2,3,4$, equations (\ref{opu2})-(\ref{opu3}) can be expressed as follows:
\begin{align}
\lambda_i = \frac{\omega B\gamma_i}{\sum\limits_{j=1}^{4}P_j\gamma_j+\omega}, \, \forall \, i = 1,2,
\label{equ:b6}
\end{align}
\begin{align}
\mu_3 = \frac{\omega B}{\phi_3\sum\limits_{j=1}^{4}P_j\gamma_j+\phi_3 \omega}, 
\label{equ:b7}
\end{align}
\begin{align}
\lambda_3 = \frac{\omega B\gamma_3}{\sum\limits_{j=1}^{4}P_j\gamma_j+\omega} + \gamma_3\mu_3.
\label{equ:b8}
\end{align}

Since the solutions of $P_i$, $\forall \, i =1,2,3,4$, all  are positive, equations (\ref{equ:b6})-(\ref{equ:b8}) show that the $\lambda_1$, $\lambda_2$, $\lambda_3$, and $\mu_3$ all are positive. Therefore, the Lagrange multiplier set $S_1 = \{\lambda_1, \lambda_2, \lambda_3, \mu_3\}$, satisfies the KKT conditions. All the other cases can easily be verified by using a similar approach.

\bibliographystyle{IEEE}

\end{document}